\numberwithin{equation}{section}
\numberwithin{figure}{section}
\newtheorem{theorem}{Theorem}[section]
\newtheorem{lemma}[theorem]{Lemma}
\newtheorem{corollary}[theorem]{Corollary}
\newtheorem{definition}[theorem]{Definition}
\theoremstyle{definition}
\newtheorem{remark}[theorem]{Remark}
\newcommand{\C}{\mathbb{C}}
\newcommand{\R}{\mathbb{R}}
\newcommand{\Z}{\mathbb{Z}}
\newcommand{\G}{\mathbb{G}}
\newcommand{\W}{\mathbb{W}}
\newcommand{\D}{\mathbb{D}}
\newcommand{\alphac}{\alpha^*}
\newcommand\avec{{\boldsymbol{\alpha}}}
\newcommand\avecc{{\boldsymbol{\alpha}^{*}}}
\newcommand\gvec{{\boldsymbol{\gamma}}}
\newcommand\gvecc{{\boldsymbol{\gamma}^{*}}}
\newcommand{\cc}[1]{\overline{#1}}
\newcommand{\RB}[1]{{\bf(To do (RB): #1)}}
\DeclareMathOperator{\im}{Im}
\title[Anomalous nodal count and Dirac points]{Anomalous nodal count and singularities in the dispersion
  relation of honeycomb graphs}
\author{Ram Band}
\address{R.~Band, Department of Mathematics, Technion - Israel Institute of Technology, Haifa 32000, Israel}
\author{Gregory Berkolaiko}
\address{G.~Berkolaiko, Department of Mathematics, Texas A\&M
  University, College Station, TX 77843-3368, USA}
\author{Tracy Weyand}
\address{T.~Weyand, Department of Mathematics, Baylor University,
  Waco, TX 76798-7328, USA}
\begin{document}

\begin{abstract}
  We study the nodal count of the so-called bi-dendral graphs and show
  that it exhibits an anomaly: the nodal surplus is never equal to 0
  or $\beta$, the first Betti number of the graph. According to
  the nodal-magnetic theorem, this means that bands of the magnetic
  spectrum (dispersion relation) of such graphs do not have maxima or
  minima at the ``usual'' symmetry points of the fundamental domain of
  the reciprocal space of magnetic parameters.

  In search of the missing extrema we prove a necessary condition for
  a smooth critical point to happen inside the reciprocal fundamental
  domain. Using this condition, we identify the extrema as the
  singularities in the dispersion relation of the maximal abelian
  cover of the graph (the honeycomb graph being an important example).

  In particular, our results show that the anomalous nodal count is an
  indication of the presence of conical points in the dispersion
  relation of the maximal universal cover. We also discover that the
  conical points are present in the dispersion relation of graphs with
  much less symmetry than was required in previous investigations.
\end{abstract}

\maketitle

\section{Introduction}

Quantum graphs and discrete graphs play a significant role in numerous
recent investigations in mathematical physics. On one side, their
importance is apparent in the fields of quantum chaos and spectral
theory \cite{KotSmi_ap99,GnuSmi_ap06,Kuc_incol08,BK_graphs}. On the
other side, they are applied to model various quasi one-dimensional
physical systems such as photonic networks, nanostructures, and
waveguides \cite{Kuc_wrm02,BK_graphs,Post_book12}.  The current work
exposes a curious link between what has so far been considered a more
theoretical aspect of spectral theory on graphs, the nodal count of
eigenfunctions, and a phenomenon with wide-ranging applied
consequences, the existence of Dirac points in the dispersion relation
of periodic structures.

The study of zeros of a quantum graph's eigenfunctions is a fertile area of
research with many questions which are still open.  Some recent
results include bounds on the number of nodal domains
\cite{PokPryObe_mz96,GnuSmiWeb_wrm04,Sch_wrcm06,Ber_cmp08}, specific
formulae for some classes of graphs \cite{BanBerSmi_ahp12}, variational
characterizations \cite{BanBerRazSmi_cmp12,BerWey_ptrsa13}, and inverse
problems \cite{Ban_ptrsa14}. To highlight just one key result, it was
proven in a sequence of works by different authors (see
\cite{Ber_cmp08,BanBerRazSmi_cmp12} and references within) that the number
$\phi_n$ of zeros of the $n$-th eigenfunction obeys the bounds
\begin{equation*}
0\leq\phi_{n} - (n-1)\leq \beta,
\end{equation*}
where $\beta$ is the first Betti number of the graph (intuitively, the
number of cycles).  A natural next question is to consider the
distribution of the values
$\left\{\phi_{n}-(n-1)\right\}_{n=1}^{\infty}$.  One would expect, for example from the magnetic variational characterization of
the ``nodal surplus'' $\phi_{n}-(n-1)$
\cite{Ber_apde13,BerWey_ptrsa13}, that all the integers between $0$ and
$\beta$ appear infinitely often as the nodal surplus of any graph with
Betti number $\beta$.  However, in the current work we show that there
is a family of graphs for which this distribution is not supported on the lower and upper bounds ($0$ and
$\beta$).  Those are the graphs which are obtained as two copies of a tree graph glued together at their corresponding leaves. Hence they are called bi-dendral graphs (see figure~\ref{fig:mandarin}(a) for an example). When the underlying tree is actually a star graph (i.e., a graph with one central vertex connected to all other vertices which are degree one), we call the resulting bi-dendral graph a mandarin graph. We prove that the above nodal count anomaly implies the presence of special singularities in the dispersion relation of the abelian cover version of the mandarin graph. This abelian cover is a periodic infinite graph, also known as the honeycomb graph; it is a tiling of the plane by congruent hexagons whose parallel edges are of equal length (figure~\ref{fig:mandarin}(c)).

Periodic infinite quantum graphs have been fruitfully used to model
diverse physical systems such as photonic crystals
\cite{KucKun_acm02}, graphene \cite{KucPos_cmp07}, and its allotropes
\cite{DoKuc_ns13}.  The Floquet-Bloch theory (see chapter 4 in
\cite{BK_graphs}) reduces the problem of determining the continuous
spectrum of a periodic graph to the study of a parameter-dependent
operator on a compact graph.  The parameters can be interpreted as
magnetic fluxes through the graph; the (now discrete) spectrum as a
function of these parameters is called the dispersion relation.  The
points where two sheets of the dispersion relation touch are of
particular interest, as many physical properties of the material are
related to the location of these points (so-called Dirac points) and
the structure of the bands in their vicinity
\cite{Cas+_rmp09,Nov_nob10,FefWei_jams12, Katsnelson_graphene,FefWei_cmp14}.
The current work characterizes the location of the Dirac points and
the structure of the corresponding eigenstates for periodic graphs,
which are the abelian covers of the mandarin graphs.  This is
particularly important as one member of this family (the 3-mandarin)
is exactly the well-studied hexagonal lattice which models graphene,
although with much reduced symmetry.  The presence of the Dirac points
is the explanation for the anomaly in the nodal count (see
section~\ref{sec:main_results} below); conversely, the anomalous count
is an indication of existence of singularities in the dispersion
relation.  The connection between the two is provided by the
``magnetic-nodal'' theorem \cite{Ber_apde13,CdV_apde13,BerWey_ptrsa13}.

The next section introduces Schr\"{o}dinger operators on quantum
graphs and section~\ref{sec:main_results} presents the main results of
the paper.  Section~\ref{sec:interlacing_and_nodal_count_of_mandarin}
contains the proof of the anomaly of the bi-dendral graph nodal count,
to which we arrive by establishing new eigenvalue interlacing results.
Section~\ref{sec:critical_points_periodic_graphs} studies the
dispersion relation of general periodic graphs and then discusses the
abelian cover of the mandarin graph and makes the connection with its
nodal count.  Finally, in appendix~\ref{sec:discrete_graphs} we adapt
the results of section~\ref{sec:critical_points_periodic_graphs} to
discrete graphs.

\section{Schr\"{o}dinger operators on quantum graphs}
\label{sec:Schroedinger_operators}

We start by defining a quantum graph, following the notational
conventions of \cite{BK_graphs}, which also contains the proofs of the
background results used in this section.

Let $\Gamma$ be a compact metric graph with vertex set $V$ and edge
set $E$. Let $\widetilde{H}^{k}(\Gamma,\C)$ be the space of all
complex-valued functions that are in the Sobolev space $H^{k}(e)$
for each edge, or in other words
\begin{equation*}
\widetilde{H}^{k}(\Gamma,\C)=\bigoplus_{e\in E}H^{k}(e).
\end{equation*}
 Consider the Schr\"odinger operator with electric potential $q:\Gamma\rightarrow\R$
defined by
\begin{equation*}
H^{0}:f\mapsto-\frac{d^{2}f}{dx^{2}}+qf,
\end{equation*}
acting on the functions from $\widetilde{H}^{2}(\Gamma,\C)$ satisfying
the $\delta$-type matching conditions
\begin{equation}
\left\{ \begin{array}{l}
f(x)\mbox{ is continuous at $v$},\\
\displaystyle\sum_{e\in{E_{v}}}\frac{df}{dx_{e}}(v)=\chi_{v}f(v),\qquad\chi_{v}\in\R
\end{array}\right.\label{eq:vconditions}
\end{equation}
at all vertices $v \in V$. Here the potential $q(x)$ is assumed to be piecewise continuous.
The set $E_{v}$ is the set of edges joined at the vertex $v$; by
convention, each derivative at a vertex is taken into the corresponding
edge. We denote by $x_{e}$ the local coordinate on the edge $e$.
On vertices of degree more than one we usually take $\chi_{v}=0$,
the so-called Neumann condition. Non-zero $\chi_{v}$ will be used
on vertices of degree one, where we also allow the Dirichlet condition
$f(v)=0$, which is formally equivalent to $\chi_{v}=\infty$. We note that a Neumann vertex of degree two can be absorbed into its neighboring edges unifying them both to a single edge (whose length equals the sum of both), without changing the graph spectral properties. At vertices
of degree two we will sometimes be using the so-called anti-Neumann vertex
condition. Namely, for a vertex $v$ of degree two, which is connected
to the edges $e_{+}$ and $e_{-}$, the anti-Neumann condition is
\begin{equation}
  \label{eq:antiN_def}
  \left.f\right|_{e_{-}}(v) = -\left.f\right|_{e_{+}}(v),
  \qquad
  \left.f'\right|_{e_{-}}(v) = \left.f'\right|_{e_{+}}(v),
\end{equation}
where the direction of the derivative is taken from the vertex $v$
into the edge $e_{\pm}$. Note that the anti-Neumann condition does
not fall in the class of vertex conditions \eqref{eq:vconditions}.

The operator $H^{0}$ is self-adjoint, bounded from below, and has
a discrete set of eigenvalues that can be ordered as
\begin{equation*}
  \lambda_{1} \leq \lambda_{2}\leq\ldots\leq\lambda_{n}\leq\ldots\hspace{0.1in}.
\end{equation*}
Throughout the paper we will frequently assume that the graph's eigenvalues
and eigenfunctions are generic in the following sense:

\begin{definition}\label{def:genericity}
~
\begin{enumerate}
\item The eigenpair $\left(\lambda,f\right)$ is called generic if the eigenvalue
$\lambda$ is simple and the corresponding eigenfunction $f$ is
different than zero on every vertex.
\item Both $\lambda$ and $f$ in a generic pair are also called generic.
\item A quantum graph is generic if all of its eigenpairs are generic.
\end{enumerate}
\end{definition}

The conditions which guarantee the graph's genericity are discussed in
\cite{Fri_ijm05,BerLiu_prep14}.  We will count zeros only for generic
eigenfunctions.

\begin{definition}\label{def:nodal_count}
~
\begin{enumerate}
\item Let $f_{n}$ be a generic eigenfunction corresponding to the $n$-th eigenvalue, $\lambda_n$ (counting with multiplicities). Denote by $\phi_{n}$ the
number of its internal zeros. Namely, we do not include the Dirichlet
vertices, if they exist, in the count.
\item The quantity $\sigma_{n}:=\phi_{n}-\left(n-1\right)$ is called the
nodal surplus.
\end{enumerate}
\end{definition}

It was recently discovered that the graph's nodal count is closely
related to properties of the magnetic Schr\"odinger operator on the graph. This connection
is described in theorem~\ref{thm:nodal_mag} below. The magnetic Schr\"odinger
operator on $\Gamma$ is given by
\begin{equation*}
H^{A}(\Gamma):f\mapsto-\left(\frac{d}{dx}-iA(x)\right)^{2}f+qf,\qquad f\in\widetilde{H}^{2}(\Gamma,\C),
\end{equation*}
 where the magnetic potential, $A(x)$, is a one-form (namely, the
sign of $A(x)$ changes with the orientation of the edge). The $\delta$-type
boundary conditions are now modified to the following at all vertices $v \in V$:
\begin{equation*}
\left\{ \begin{array}{l}
f(x)\mbox{ is continuous at $v$},\\
\displaystyle\sum_{e\in{E_{v}}}\left(\frac{df}{dx_{e}}(v)-iA(v)f(v)\right)=\chi_{v}f(v),\qquad\chi_{v}\in\R.
\end{array}\right.
\end{equation*}

Let $\beta=|E|-|V|+1$ be the first Betti number of the graph $\Gamma$,
i.e.\ the rank of the graph's fundamental group. Informally
speaking, $\beta$ is the number of ``independent'' cycles on the
graph and hence is zero if the graph is a tree. Up to a change of gauge, a magnetic field on a graph is fully
specified by $\beta$ fluxes $\alpha_{1},\alpha_{2},\ldots,\alpha_{\beta}$,
defined as
\begin{equation*}
\alpha_{j}=\oint_{\tau_{j}}A(x)dx\mod2\pi,
\end{equation*}
where $\{\tau_{j}\}$ is a set of generators of the fundamental group.
In other words, magnetic Schr\"odinger operators with different magnetic
potentials $A(x)$, but the same fluxes
$(\alpha_{1},\ldots,\alpha_{\beta})$, are unitarily
equivalent. Therefore, the eigenvalues $\lambda_{n}(H^{A})$ can be
viewed as functions of $\avec=(\alpha_{1},\ldots,\alpha_{\beta})$.
The connection between this function and the nodal count is explicated
in the following theorem.

\begin{theorem} \label{thm:nodal_mag} Let $\left(\lambda_{n},f_{n}\right)$
be a generic eigenpair of $H^{A}$ with $A$ corresponding to some
flux $\avecc\in\left\{ 0,\pi\right\} ^{\beta}$. Then $\avecc$
is a non-degenerate critical point of the function $\lambda_{n}(\avec)$
and its Morse index is equal to the nodal surplus, $\sigma_{n}$.
\end{theorem}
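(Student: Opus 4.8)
The plan is to combine an elementary symmetry argument (for the critical point) with second-order perturbation theory reduced to a finite matrix (for non-degeneracy and the Morse index), matching the signature of that matrix to the nodal surplus via the combinatorics already behind the bound $0\le\phi_n-(n-1)\le\beta$.

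\textbf{Step 1 (critical point).} Complex conjugation intertwines $H^A$ with $H^{-A}$, so the two are isospectral; in terms of fluxes this says $\lambda_n(\avec)=\lambda_n(-\avec)$ as functions on the torus $(\R/2\pi\Z)^\beta$. Since $\lambda_n(\avecc)$ is simple, analytic perturbation theory makes $\lambda_n$ real-analytic near $\avecc$, and each coordinate of $\avecc$ lying in $\{0,\pi\}$ is a fixed point of $t\mapsto-t$ modulo $2\pi$; differentiating the symmetry relation at $\avecc$ then forces $\nabla\lambda_n(\avecc)=0$.

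\textbf{Step 2 (gauge reduction).} I would change gauge so that $A$ is supported near $\beta$ interior points $y_1,\dots,y_\beta$, one on each edge of a chosen cut set whose removal turns $\Gamma$ into a spanning tree. Then $H^{\avec}$ acts as the fixed non-magnetic operator $-d^2/dx^2+q$ with the original conditions \eqref{eq:vconditions}, supplemented at each $y_j$ by the twisted transmission condition $f(y_j^+)=e^{i\alpha_j}f(y_j^-)$, $f'(y_j^+)=e^{i\alpha_j}f'(y_j^-)$. For $\alpha_j\in\{0,\pi\}$ this is a real condition (the $\pi$ case is exactly the anti-Neumann condition \eqref{eq:antiN_def}), so at $\avecc$ the operator is real and $f=f_n$ may be taken real-valued; moreover the family is analytic of Kato type (A) in $\avec$, so the Rayleigh--Schr\"odinger series for the simple eigenvalue $\lambda_n$ is available.

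\textbf{Step 3 (the Hessian as a finite matrix).} First-order perturbation theory produces $\partial_{\alpha_j}\lambda_n(\avecc)$ as a boundary functional of $f$ at $y_j$ which is purely imaginary when $f$ is real, hence $0$ (consistent with Step 1). Second-order perturbation theory then expresses the Hessian as a $\beta\times\beta$ matrix $M$, combining the diagonal contribution of $\partial^2_{\alpha_j}H$ with the resolvent terms $-2\langle\dot H_j f,(H-\lambda_n)^{-1}\dot H_k f\rangle$, the resolvent taken on $f^{\perp}$. Introducing the real solutions $g_j$ of $(H-\lambda_n)g_j=\dot H_j f$ with $g_j\perp f$, the matrix $M$ becomes an explicit Gram-type matrix in the boundary data of $f$ and the $g_j$ at the points $y_j$; equivalently, I would pass to the finite-dimensional secular / Dirichlet-to-Neumann reduction so that $M$ is a fully explicit finite matrix.

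\textbf{Step 4 (inertia of $M$).} The remaining task is to show $M$ is non-singular with exactly $\sigma_n$ negative eigenvalues. I would mark the $\phi_n$ interior zeros of $f$, cut $\Gamma$ there (on each nodal subgraph $f$ is a ground state for the induced problem), and analyse the quadratic form $M$ by tracking how the cut set $\{y_j\}$ sits in the cycle space relative to the nodal cut, using the combinatorial identity between the number of independent cycles, the nodal points, and the index $n-1$ that underlies the nodal bound. Genericity of $(\lambda_n,f_n)$ provides the strict inequalities needed for non-degeneracy, and the count yields exactly $\phi_n-(n-1)=\sigma_n$ negative eigenvalues, so $\avecc$ is a non-degenerate critical point of Morse index $\sigma_n$.

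The hard part will be Steps 3--4: controlling the resolvent term and matching the signature of the resulting finite matrix with the nodal surplus. I expect the cleanest route is to perform the secular reduction first, so $M$ becomes wholly explicit, and then to import the combinatorial bookkeeping already developed for the nodal bounds rather than redo it from scratch.
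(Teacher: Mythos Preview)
The paper does not give a self-contained proof of this theorem. It is stated as a known result: Remark~\ref{rem:nodal_mag} cites \cite{BerWey_ptrsa13} for the case $\avecc=\boldsymbol{0}$ (building on \cite{Ber_apde13,CdV_apde13}) and then lists three observations explaining why the same argument goes through at any $\avecc\in\{0,\pi\}^\beta$ --- essentially your Step~2 (a flux of $\pi$ is unitarily equivalent to an anti-Neumann vertex, which keeps the operator real and leaves the relevant quadratic forms unchanged) together with the tree nodal-count statement. So your Steps~1--2 are correct and in fact constitute \emph{more} than what the paper itself supplies; the paper's contribution here is only the extension remark, not the underlying proof.

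Your Steps~3--4 are an attempt to reconstruct the argument of \cite{BerWey_ptrsa13}, and this is where the proposal is not yet a proof. Step~3 is workable: the Hessian can indeed be written via second-order perturbation theory or, more cleanly, via the Dirichlet-to-Neumann / secular reduction, and this is close to what \cite{BerWey_ptrsa13} does. But Step~4 as written is a restatement of the goal rather than an argument. Saying you will ``track how the cut set $\{y_j\}$ sits in the cycle space relative to the nodal cut'' and ``import the combinatorial bookkeeping'' does not explain \emph{why} the signature of $M$ equals $\phi_n-(n-1)$. The actual mechanism in \cite{BerWey_ptrsa13} is: cut the $\beta$ edges to obtain a tree, on which the eigenfunction (with suitable $\delta$-conditions at the new leaves so that it remains an eigenfunction) sits at some position $m$; the tree nodal theorem gives $\phi_n=m-1$; and a separate index/interlacing argument identifies the Morse index of the magnetic Hessian with the shift $m-n$. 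Your sketch does not isolate either of these two ingredients, and in particular the link between the perturbative Hessian $M$ and the eigenvalue shift under cutting is the non-obvious step that you have not supplied.
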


\begin{remark}
  \label{rem:nodal_mag}
  The above theorem is proved in \cite{BerWey_ptrsa13} for
  $\avec=(0,\ldots,0)$, extending to quantum graphs an earlier result
  on discrete graphs \cite{Ber_apde13} (also see \cite{CdV_apde13}).
  One may repeat the same proof from \cite{BerWey_ptrsa13} for all other $\avec\in\left\{
    0,\pi\right\} ^{\beta}$, noticing the following:
\begin{enumerate}
\item Introducing a vertex of degree two with an anti-Neumann vertex
  condition on some edge results in an operator which is unitarily
  equivalent to the original operator with a magnetic potential on
  this edge integrating to $\pi$.
\item The anti-Neumann condition keeps the operator self-adjoint and
  does not change either of the quadratic forms used in section 4 of
  \cite{BerWey_ptrsa13}; the only change needed is the additional
  condition $f\big|_{e_{-}}(v) = -f\big|_{e_{+}}(v)$ imposed on the
  domain of the quadratic form.
\item \label{itm:tree_count}
  The $n^{\textrm{th }}$ eigenfunction of a tree graph has $n-1$
  internal zeros and this holds even if some anti-Neumann conditions
  are imposed on the graph (relevant for the proof of theorem 3.3,
  part 1 in \cite{BerWey_ptrsa13}).
\end{enumerate}
\end{remark}

As a corollary, we get an earlier result (see \cite{Ber_cmp08,BanBerRazSmi_cmp12}
and references therein),
\begin{equation}
0\leq\sigma_{n}\leq\beta.\label{eq:nodal_surplus_bounds}
\end{equation}
The theorem above makes one wonder whether the whole range of integers
from $0$ to $\beta$ is covered by $\sigma_{n}$ for any graph. We will
show that there is a family of quantum graphs for which the answer
to this question is negative. These graphs, to be discussed next,
and their spectral properties are the focus of this paper.

\section{Main results\label{sec:main_results}}

\subsection{Anomalous nodal count}

\begin{figure}[t]
  \centering
  \includegraphics[scale=0.8]{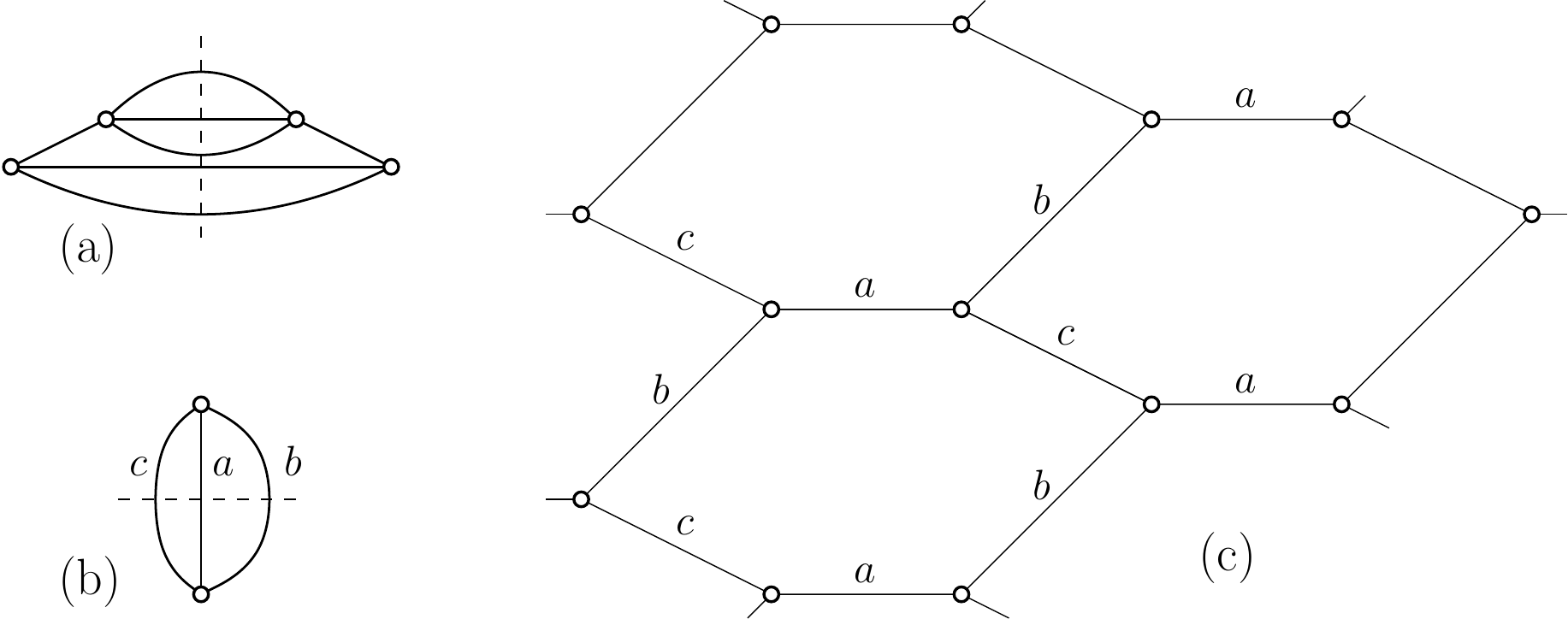}
  \caption{An example of a bi-dendral graph (a), the symmetry axis is
    indicated by the dashed line; a 3-mandarin graph (b) and its
    maximal abelian cover (c). The abelian cover graph has honeycomb
    structure although less symmetry than a regular honeycomb lattice:
    only the parallel edges have the same length.}
\label{fig:mandarin}
\end{figure}

Consider two copies of a tree graph with $d$ leaves (vertices of degree one) which are glued
together by identifying corresponding leaves of both. We impose
Neumann conditions at all vertices and call the resulting quantum graph
a \emph{bi-dendral graph}. When the underlying tree graph is
chosen to be a star graph, we obtain a graph consisting of two
vertices and $d$ edges connecting them (recalling that Neumann
vertices of degree two can be absorbed into an edge). We will call
such a graph a \emph{$d$-mandarin graph} (see
figure~\ref{fig:mandarin}).  A bi-dendral graph has an obvious
symmetry axis which passes through all points arising from gluing the
leaves. We call the edges which cross this symmetry axis the
\emph{middle edges}. We extend the definition of a bi-dendral graph by
allowing for vertices of degree two with anti-Neumann condition to be
present on those middle edges. Finally, we will be assuming the graph
edge lengths are chosen such that either a particular eigenvalue is
generic or the graph is generic (see definition~\ref{def:genericity}).

\begin{theorem} \label{thm:mandarin_mag_surplus_bounds} Let $\Gamma$
be a bi-dendral graph with $d$ middle edges, such that $a\geq 0$ of them have
anti-Neumann conditions imposed at an intermediate point.
If the $n^{\textrm{th}}$ eigenpair is generic and $n>1$ or $a>0$,
then the nodal surplus $\sigma_{n}$ satisfies
\begin{equation}
1\leq\sigma_{n}\leq\beta-1=d-2.\label{eq:mandarin_mag_surplus_bounds}
\end{equation}
\end{theorem}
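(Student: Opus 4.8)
The plan is to exploit the $\mathbb{Z}_2$ reflection symmetry of a bi-dendral graph $\Gamma$ and combine it with the nodal-magnetic theorem (Theorem~\ref{thm:nodal_mag}). Write $\Gamma$ as the union of two copies $T_+$ and $T_-$ of the underlying tree, glued along the gluing vertices, with the middle edges crossing the symmetry axis. The reflection $R$ swaps $T_+$ and $T_-$; since $R$ commutes with $H^0$, every eigenspace decomposes into symmetric and antisymmetric parts. A symmetric eigenfunction is determined by its restriction to a ``half-graph'': one copy of the tree with Neumann (or, where an anti-Neumann point sits on a middle edge, a modified) condition at the cut points of the middle edges; an antisymmetric eigenfunction is determined by the same half-graph but with Dirichlet conditions at those cut points (or the complementary condition on the anti-Neumann middle edges). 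Thus the full spectrum is the interlaced union of two tree spectra. On a tree $\beta=0$, so by \eqref{eq:nodal_surplus_bounds} the $k$-th eigenfunction of each half-graph has exactly $k-1$ internal zeros (this is point~\ref{itm:tree_count} of Remark~\ref{rem:nodal_mag}, valid even with anti-Neumann conditions present).

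Next I would count zeros of $f_n$ on $\Gamma$ from this decomposition. Suppose $\lambda_n = \lambda$ is the $k$-th symmetric eigenvalue and simultaneously the $m$-th antisymmetric eigenvalue is $\lambda_{n'}$ for some $n'$; by genericity only one of these cases occurs at $\lambda_n$, and the ordering gives $n = k + (\text{number of antisymmetric eigenvalues below }\lambda)$ or the analogous expression in the antisymmetric case. If $f_n$ is symmetric, its zero count on $\Gamma$ is $2(k-1)$ from the two tree copies, plus a contribution from the middle edges: on a middle edge without an extra vertex, symmetry forces either a zero at the midpoint or a nonzero even profile, and one checks that across all $d$ middle edges the symmetric case contributes a number of zeros that, after subtracting $n-1$, lands strictly between $1$ and $d-2$. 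If $f_n$ is antisymmetric, each middle edge (by the Dirichlet-type matching at the axis, or the sign flip) is forced to carry at least one internal zero, giving at least $d$ zeros from the middle edges alone plus $2(k-1)$ from the tree halves, which pushes $\sigma_n \geq 1$; the upper bound $\sigma_n \le d-2$ follows because an antisymmetric eigenfunction cannot be the ground state of the symmetric problem, trimming one unit off each end of the naive range. The condition $n>1$ or $a>0$ is exactly what rules out the degenerate ground-state configuration that would otherwise allow $\sigma_n=0$.

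An alternative and cleaner route, which I would actually pursue in parallel, is to argue directly via Theorem~\ref{thm:nodal_mag}: $\sigma_n$ equals the Morse index of $\lambda_n(\avec)$ at $\avec^* = (0,\dots,0)$, so it suffices to show that this critical point is neither a local minimum ($\sigma_n = 0$) nor a local maximum ($\sigma_n = \beta$). Non-minimality: perturb the fluxes so as to introduce a half-flux $\pi$ around one cycle through a pair of middle edges; by the symmetry decomposition above, at flux $\pi$ the relevant operator is the half-graph with mixed boundary conditions, and one shows $\lambda_n$ strictly decreases along this direction — equivalently, the antisymmetric perturbation lowers the energy — giving a descent direction and hence $\sigma_n \geq 1$. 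Non-maximality follows symmetrically: there is an ascent direction, because flipping the roles of symmetric and antisymmetric subspaces (or using a different cycle) produces a flux direction in which $\lambda_n$ increases. The main obstacle, and the step requiring genuine care, is establishing the strict monotonicity of $\lambda_n$ in these specific flux directions at the symmetry point — i.e., computing the sign of the relevant second derivative (Hessian entry) of $\lambda_n(\avec)$ using the explicit structure of the symmetric/antisymmetric eigenfunctions on the middle edges. This is precisely where the new interlacing results alluded to in the introduction (Section~\ref{sec:interlacing_and_nodal_count_of_mandarin}) are needed, and where the hypothesis ``$n>1$ or $a>0$'' must be used to exclude the one exceptional eigenvalue (the ground state with all-even profile) at which the descent direction degenerates.
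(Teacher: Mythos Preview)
Your skeleton is correct and coincides with the paper's: exploit the reflection symmetry to split the spectrum of $\Gamma$ into that of two half-trees $\Gamma_{\mathrm{even}}$ and $\Gamma_{\mathrm{odd}}$ carrying complementary Dirichlet/Neumann conditions at the midpoints of the middle edges, use the tree nodal count on each half, and unfold. But the proof breaks down precisely at the sentence ``one checks that \ldots\ lands strictly between $1$ and $d-2$.'' Knowing that the $k$-th symmetric eigenfunction has $2(k-1)+a$ zeros on $\Gamma$ tells you $\phi_n$, but to compute $\sigma_n = \phi_n-(n-1)$ you must locate $n$, i.e.\ you must know how many \emph{antisymmetric} eigenvalues lie below $\lambda_k(\Gamma_{\mathrm{even}})$. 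This is the entire content of the argument, and you have not supplied it. The paper obtains it from a new interlacing inequality (Lemma~\ref{lem:DNtree_mixed}): for a tree with $t$ Dirichlet leaves one has $\lambda_k(\Gamma_{DN}) \le \lambda_{k+t-1}(\Gamma_{ND})$, which is one unit sharper than what naive rank-one interlacing gives. Applied twice this pins $n$ to the interval $[2k-d+a+1,\,2k+a-2]$, and the bounds $1 \le \sigma_n \le d-2$ fall out by arithmetic. The condition $k+a-1>0$ (equivalently $n>1$ or $a>0$) is exactly the hypothesis needed for the upper interlacing inequality to be nonvacuous.

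Your alternative route via Theorem~\ref{thm:nodal_mag} is backwards relative to the paper: the paper proves the nodal bound \emph{directly} and then invokes the nodal-magnetic theorem to conclude that symmetry points are saddles (this is how Theorem~\ref{thm:mandarin_magnetic_cp} is obtained), not the reverse. Trying to produce explicit ascent/descent directions for $\lambda_n(\avec)$ at $\avec=0$ by hand would require controlling the sign of $\partial^2\lambda_n/\partial\alpha_j^2$, which---as you yourself note---reduces to the same comparison of the even and odd half-tree spectra that you have not carried out. A smaller point: your midpoint zero-counting is off. A symmetric eigenfunction has zeros at the midpoints of the $a$ anti-Neumann edges (Dirichlet there) and none at the other $d-a$; an antisymmetric one has zeros at the $d-a$ ordinary midpoints and none at the anti-Neumann ones. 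The total is $2(k-1)+a$ in the symmetric case, not something one ``checks'' edge by edge.
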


\begin{remark}
Compare \eqref{eq:mandarin_mag_surplus_bounds} with the general bounds of the nodal surplus, \eqref{eq:nodal_surplus_bounds}.
\end{remark}
Consider the eigenvalue $\lambda_{n}(\avec)$ as a function of $\avec=(\alpha_{1},\ldots,\alpha_{\beta})$,
the total fluxes through the cycles of the graph (for some pre-fixed
basis of the fundamental group). It is well-known that the points
$\avec\in\left\{ 0,\pi\right\} ^{\beta}$ are critical points of the
function $\lambda_{n}(\avec)$: they are the fixed points of the symmetry
transformation $\avec\mapsto-\avec\,\left(\textrm{mod }2\pi\right)$.
Henceforth we will refer to these points as \emph{symmetry points}.

Theorem~\ref{thm:nodal_mag} states that the nodal surplus of the
eigenfunction at a symmetry point is equal to the Morse index of $\lambda_{n}(\avec)$
at this point. We now infer from theorem~\ref{thm:mandarin_mag_surplus_bounds}
that \emph{none} of the symmetry points are extrema of the eigenvalues
$\lambda_{n}(\avec)$ (with the exception of the eigenvalue $\lambda_{1}(0,\ldots,0)$,
which is always a minimum). Since the $\lambda_{n}(\avec)$ are continuous
functions on the compact torus $(-\pi,\pi]^{\beta}$, the extrema
must be achieved somewhere else.

\subsection{A condition for an internal critical point}

Searching for the ``missing'' extremal points of the functions
$\lambda_{n}(\avec)$, we need to look in the bulk of the fundamental
domain of the $\avec$-space.  Critical (extremal) points that are not
one of the previously identified symmetry points will be called
\emph{internal critical (extremal) points}. A common (and mistaken)
assumption is that there are no internal extremal points; it was
pointed out in \cite{HarKucSob_jpa07} that such points may occur and explicit examples were constructed in \cite{HarKucSob_jpa07, ExnKucWin_jpa10}.
Moreover, in the case of mandarin graphs their occurrence is simply
unavoidable; the extrema must be achieved somewhere.

Our search for extremal points is aided by proving a necessary condition
that simple internal critical points must obey.

\begin{theorem} \label{thm:magnetic_cp_condition} Let $\Gamma$ be a graph with first Betti number equal to $\beta$ and denote by $\avec\in(-\pi,\pi]^{\beta}$ the total fluxes through some choice of cycles of the graph that form a
basis of its fundamental group. If the eigenvalue $\lambda_{n}(\avec)$ has an internal critical point $\avecc$ (i.e., $\avecc\notin\left\{ 0,\pi\right\} ^{\beta}$)
and $\lambda_{n}(\avecc)$ is simple, then the eigenfunction corresponding
to $\lambda_{n}(\avecc)$ is equal to zero at some vertex of the graph.
\end{theorem}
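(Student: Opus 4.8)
The plan is to prove the equivalent statement: if $\lambda_n(\avecc)$ is simple, $\avecc$ is a critical point of $\lambda_n(\avec)$, and the corresponding eigenfunction $f$ does not vanish at any vertex, then $\avecc\in\{0,\pi\}^\beta$. As a first reduction I would fix a spanning tree $T$ of $\Gamma$ with non-tree edges $e_1,\dots,e_\beta$. Any choice of basis cycles differs from the basis of fundamental cycles of $e_1,\dots,e_\beta$ by an integer-matrix change of coordinates, which preserves both the critical-point condition and the set $\{0,\pi\}^\beta$; so I would assume the basis consists of those fundamental cycles and work in the gauge in which $A\equiv0$ on $T$ and $A\equiv\alpha_j/\ell_j$ on $e_j$ (of length $\ell_j$), so that $\alpha_j$ is the flux through the $j$-th fundamental cycle. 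If $\Gamma$ has a Dirichlet vertex, or if $f$ vanishes identically on some edge, then $f$ vanishes at a vertex and there is nothing to prove, so assume neither happens.

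The first main step is to translate the critical-point condition into the vanishing of the probability currents $W_e:=\im\big(\overline{f_e}(f_e'-iAf_e)\big)$, which are constant along each edge. By the standard magnetic perturbation (Hellmann--Feynman) formula---valid because $\lambda_n(\avecc)$ is simple and hence $\lambda_n$ is real-analytic near $\avecc$---the derivative $\partial\lambda_n/\partial\alpha_j$ equals $W_{e_j}$ up to a fixed nonzero constant (and the normalization of $f$). Hence $\nabla_{\avec}\lambda_n(\avecc)=0$ gives $W_{e_1}=\dots=W_{e_\beta}=0$. Multiplying the magnetic $\delta$-condition at any vertex $v$ by $\overline{f(v)}$ and taking imaginary parts shows that the currents satisfy Kirchhoff's law at every vertex, so $(W_e)_{e\in E}$ is a circulation on $\Gamma$; a circulation is determined by its values on the complement of a spanning tree, so \emph{all} currents $W_e$ vanish. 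Finally, on an edge $e$ I would write $f_e(x)=e^{iA_e x}u_e(x)$ with $u_e$ solving the real Sturm--Liouville equation $-u_e''+qu_e=\lambda_n u_e$; then $W_e=\im(\overline{u_e}u_e')$ is a constant multiple of the (constant) Wronskian of $u_e$ and $\overline{u_e}$, so $W_e=0$ together with $u_e\not\equiv0$ forces $u_e$ and $\overline{u_e}$ to be proportional, i.e.\ $u_e=e^{i\psi_e}v_e$ with $v_e$ real-valued and $\psi_e$ constant. Thus $f_e(x)=e^{i(A_e x+\psi_e)}v_e(x)$ on every edge.

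The last step is to propagate this phase structure around the graph. Since $f$ is nonzero at every vertex, at a vertex $v$ continuity of $f$ and the reality of $v_e$ force, for each incident edge $e$, the value of the phase $A_e x+\psi_e$ at the endpoint $v$ to be congruent modulo $\pi$ to $\arg f(v)$; denote their common residue by $\Theta(v)\in\R/\pi\Z$. On a tree edge $A_e\equiv0$, so $\psi_e$ is the phase at both endpoints and $\Theta$ takes equal values there; as $T$ spans $\Gamma$, $\Theta$ is constant on the vertex set. On the non-tree edge $e_j$ the phase $A_e x=\alpha_j x/\ell_j$ runs from $0$ to $\alpha_j$ between the two endpoints, so $\Theta$ at those endpoints differs by $\alpha_j$ in $\R/\pi\Z$; combined with $\Theta$ being constant this yields $\alpha_j\equiv0\pmod\pi$, i.e.\ $\alpha_j\in\{0,\pi\}$, for every $j$, hence $\avecc\in\{0,\pi\}^\beta$. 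The hardest part, I expect, is the perturbation identity $\partial\lambda_n/\partial\alpha_j\propto W_{e_j}$ (a clean derivation uses a gauge in which the operator's domain does not depend on $\alpha_j$), together with the companion observation that the currents form a circulation so that their vanishing on a complement of a spanning tree propagates to the whole graph; once these are in hand, the Wronskian computation and the phase propagation are routine.
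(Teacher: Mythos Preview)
Your argument is correct and takes a genuinely different route from the paper's.

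The paper proves theorem~\ref{thm:magnetic_cp_condition} as a corollary of a more general result (theorem~\ref{thm:quantumdegeneracy}): it cuts $\Gamma$ at $\beta$ points to obtain a tree $\W$, shows via Hellmann--Feynman that $\gamma_j^*:=f'(c_j^+)/f(c_j^+)$ is real at each cut, and then introduces an auxiliary \emph{Robin} operator $H^{\gvecc}$ on $\W$.  This operator is real and self-adjoint and has $f$ as an eigenfunction; since $f$ cannot be made real (some $\alpha_j^*\neq 0,\pi$), the eigenvalue is degenerate, and a cited tree lemma then forces $f$ to vanish at an internal vertex of $\W$ of degree at least three.  You share the initial Hellmann--Feynman step (the paper's lemma~\ref{lemma: real4} is precisely your identification $\partial\lambda_n/\partial\alpha_j\propto W_{e_j}$), but from there you diverge: instead of the Robin operator, you observe that the magnetic currents $W_e$ obey Kirchhoff's law at every vertex, hence form a circulation, so their vanishing on the non-tree edges propagates to all edges; the edge-by-edge Wronskian argument then makes $f$ real up to an $x$-dependent unimodular phase on each edge, and the phase-matching around the fundamental cycles forces $\alpha_j\equiv 0\pmod\pi$.

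Both approaches are clean.  The paper's route yields the auxiliary degeneracy statement for $H^{\gvecc}$ (which is of independent interest and has a discrete analogue in the appendix) and the slightly sharper conclusion that the zero occurs at a vertex of degree $\geq 3$ of the cut tree.  Your route is more self-contained---it avoids introducing a second operator and does not need to invoke the tree-degeneracy lemma---and the Kirchhoff/circulation observation that upgrades ``currents vanish on the non-tree edges'' to ``currents vanish everywhere'' is an elegant replacement for that lemma.  Your caveat about doing the Hellmann--Feynman computation in a gauge where the domain is $\avec$-independent is well taken; the paper handles this by supporting the perturbation $B_j$ compactly inside the edge (see the proof of lemma~\ref{lemma: real4}).
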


\subsection{The missing extrema are found at touching bands}

The necessary condition given in
theorem~\ref{thm:magnetic_cp_condition} allows us to show that an
internal critical point of a mandarin graph must be
degenerate. Calling the graph of the function $\lambda_{n}(\avec)$ a
band, we get the following:

\begin{theorem}
  \label{thm:mandarin_magnetic_cp}
  For a generic mandarin graph, all extrema of $\lambda_{n}(\avec)$,
  apart from the minimum of $\lambda_{1}(\avec)$, are achieved at points
  where two bands touch.
\end{theorem}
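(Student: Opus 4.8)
The plan is to combine the necessary condition of theorem~\ref{thm:magnetic_cp_condition} with the nodal-count bounds of theorem~\ref{thm:mandarin_mag_surplus_bounds} and the magnetic-nodal theorem~\ref{thm:nodal_mag}. Fix a generic mandarin graph $\Gamma$ with $d$ edges (so $\beta = d-1$) and consider an extremum $\avecc$ of some band $\lambda_n(\avec)$, other than the global minimum of $\lambda_1$. There are two cases depending on whether $\lambda_n(\avecc)$ is a simple eigenvalue. If it is \emph{not} simple, then two (or more) bands meet at $\avecc$ and we are done immediately — that is exactly the ``touching bands'' conclusion. So the real content is to rule out an extremum at a point where $\lambda_n(\avecc)$ is simple, and for this I would split further into the symmetry-point case and the internal case.

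First, the symmetry points $\avecc \in \{0,\pi\}^\beta$. At such a point the magnetic-nodal theorem~\ref{thm:nodal_mag} tells us that $\avecc$ is a non-degenerate critical point whose Morse index equals the nodal surplus $\sigma_n$. A local extremum of a non-degenerate critical point has Morse index either $0$ (minimum) or $\beta$ (maximum). But theorem~\ref{thm:mandarin_mag_surplus_bounds}, applied with $a=0$ anti-Neumann vertices, gives $1 \le \sigma_n \le \beta - 1$ whenever $n > 1$; the sole exception is $\lambda_1$ at $\avec = (0,\dots,0)$, which we have already excluded. Hence no symmetry point other than the minimum of $\lambda_1$ can be an extremum when the eigenvalue there is simple. (The anti-Neumann endpoints of the middle edges correspond to fluxes equal to $\pi$, so the statement about symmetry points already absorbs the ``$a>0$'' part of theorem~\ref{thm:mandarin_mag_surplus_bounds}, but for a plain mandarin we only need $a=0$.)

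Second, the internal case $\avecc \notin \{0,\pi\}^\beta$ with $\lambda_n(\avecc)$ simple. Here theorem~\ref{thm:magnetic_cp_condition} applies directly: the eigenfunction $f$ corresponding to $\lambda_n(\avecc)$ must vanish at some vertex of $\Gamma$. But the mandarin graph has only two vertices, and if a Schr\"odinger eigenfunction on a mandarin vanishes at one vertex, an elementary argument forces it to vanish at the other as well: the function is supported on a graph which, after removing the vanishing vertex, is a disjoint union of intervals with a Dirichlet condition at that vertex; matching this with the second vertex and counting degrees of freedom against the $\delta$-type conditions shows the remaining vertex value is also forced to be zero (alternatively, one can invoke that such a doubly-Dirichlet configuration is a non-generic degeneracy). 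Either way, the eigenfunction at $\avecc$ violates genericity in the sense of definition~\ref{def:genericity}(1), which contradicts the assumption that the mandarin graph is generic — \emph{unless} $\lambda_n(\avecc)$ is in fact not simple, i.e.\ unless we are at a band-touching point. This is the step I expect to be the main obstacle: one must be careful about what ``generic mandarin graph'' buys us, since the magnetic parameter $\avecc$ is not a length parameter, and the cleanest route is probably to argue that a vanishing-at-a-vertex eigenfunction at an internal $\avecc$ can only occur when the eigenvalue is degenerate there (so that the genericity hypothesis on the non-magnetic graph, together with an analytic-perturbation argument in $\avec$, is not actually contradicted).

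Assembling the three cases: every extremum of every band, other than the minimum of $\lambda_1$, either sits at a point where $\lambda_n$ is non-simple (bands touch) or leads to a contradiction. Therefore all such extrema occur at band-touching points, which is the assertion of theorem~\ref{thm:mandarin_magnetic_cp}. The only genuinely delicate point, as noted, is handling the internal simple-eigenvalue case cleanly — making rigorous the claim that an eigenfunction vanishing at one vertex of a two-vertex graph is incompatible with the eigenvalue being simple and the graph being generic.
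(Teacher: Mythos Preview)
Your overall architecture matches the paper's: assume a simple extremum, split into symmetry points versus internal points, and use theorems~\ref{thm:nodal_mag} and~\ref{thm:mandarin_mag_surplus_bounds} at symmetry points and theorem~\ref{thm:magnetic_cp_condition} internally to force the eigenfunction $f$ to vanish at a vertex. The gap is in what you do once you know $f$ vanishes at a vertex.

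Your assertion that vanishing at one vertex of the mandarin forces vanishing at the other is false. If $f$ vanishes at the top vertex, push all the magnetic phases to that vertex; ignoring the (now phased) Kirchhoff condition there, $f$ is an eigenfunction of the Dirichlet star graph with $d$ Dirichlet leaves and a Neumann center. For generic edge lengths such an eigenfunction is \emph{non-zero} at the central vertex, so in fact $f(v_{\mathrm{bottom}})\neq 0$. And even had both vertex values vanished, your appeal to Definition~\ref{def:genericity} would not bite, for precisely the reason you flag yourself: genericity is a hypothesis on the non-magnetic operator, not on $H^{\avecc}$ at an arbitrary flux.

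The idea you are missing is the mandarin's reflection symmetry. With $f(v_{\mathrm{top}})=0\neq f(v_{\mathrm{bottom}})$, let $F$ denote the top--bottom reflection and consider $\overline{Ff}$. Reflection reverses the edge orientations and hence sends $\avec\mapsto -\avec$; complex conjugation undoes this. Therefore $\overline{Ff}$ is an eigenfunction of the \emph{same} operator $H^{\avecc}$ with the same eigenvalue, but it vanishes at the bottom vertex and not at the top, so it is linearly independent of $f$. Hence $\lambda_n(\avecc)$ is degenerate, contradicting the assumed simplicity. (A small addendum to your symmetry-point case: theorem~\ref{thm:nodal_mag} needs the eigenpair at $\avecc$ to be generic, in particular non-vanishing on vertices. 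If it does vanish at a vertex there, you cannot invoke the Morse-index count; instead that situation is absorbed into the same reflection argument.)
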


\begin{figure}
  \centering
  \includegraphics[scale=0.4]{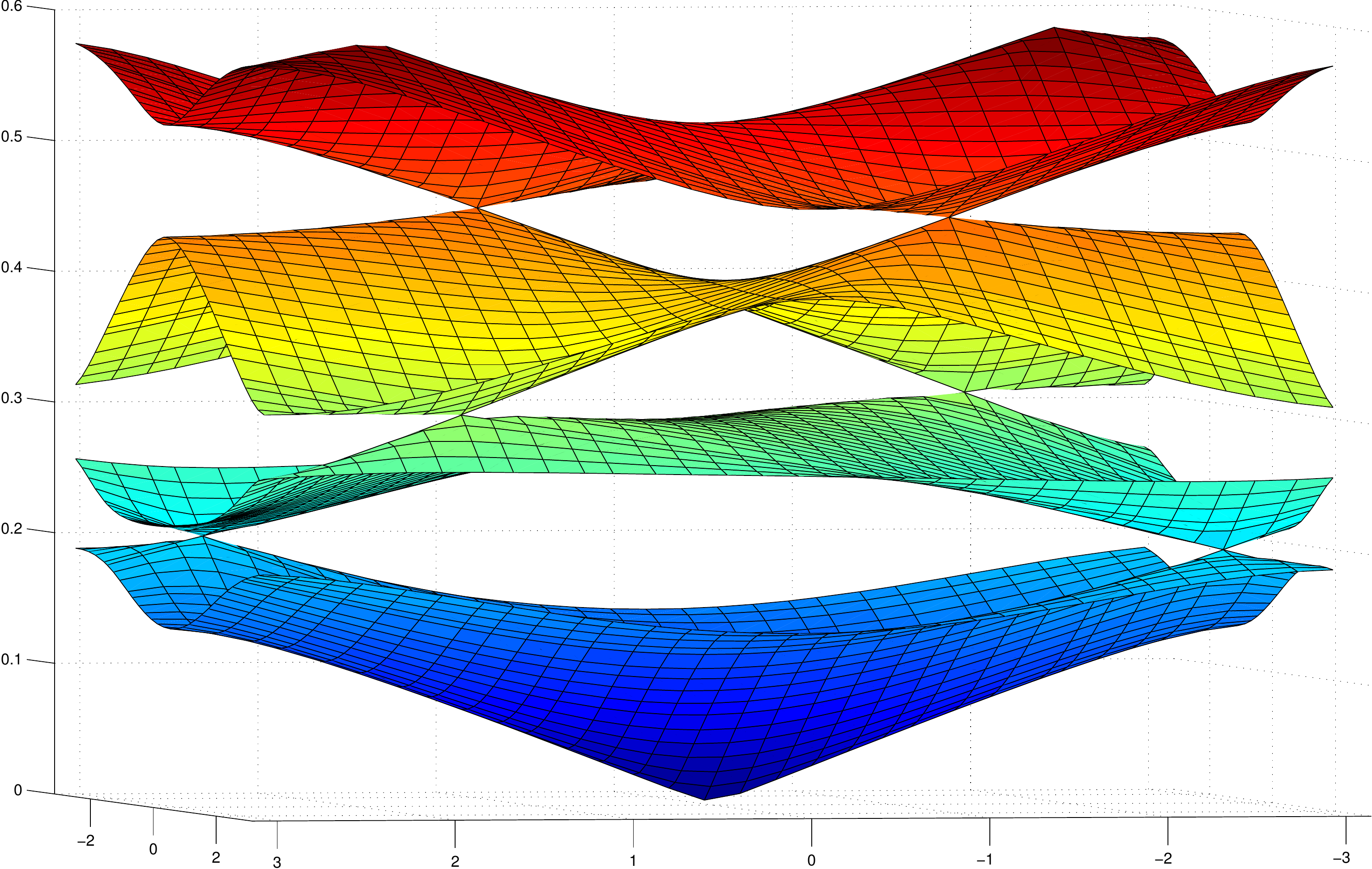}
  \caption{First four bands of a 3-mandarin graph, as functions of
    $\beta=2$ magnetic fluxes.  The vertical axis is $k =
    \sqrt{\lambda}$; in this scale the bands have comparable sizes
    \cite{BanBer_prl13}.  This choice of scale also explains the
    conical point at the bottom of the lowest band (in contrast,
    $\lambda_1(\avec)$ is differentiable at $\boldsymbol{0}$).}
  \label{fig:mandarin_disp}
\end{figure}

Figure~\ref{fig:mandarin_disp} illustrates the result of the theorem:
each pair of consecutive bands touch at two symmetrically located points.

It is well-known that the magnetic spectrum of a graph as a function
of the fluxes coincides, by Floquet--Bloch reduction, with the
dispersion relation of the abelian cover of the graph.  Here the
phases $\avec$ take the meaning of the quasi-momenta of the
Bloch functions. For a 3-mandarin graph there are two
$\alpha$-parameters, and a degeneracy in two parameters is typically a
conical point \cite[app.~10]{Arnold_mechanics}.  The maximal abelian
cover of a 3-mandarin is a honeycomb-like graph, where only the
parallel edges are required to have equal length (see
figure~\ref{fig:mandarin}).  The above chain of results shows that
typically there are conical (``Dirac'') points in the dispersion
relation of such a graph between each pair of adjacent bands.

\section{Interlacing and nodal count for mandarin graphs}
\label{sec:interlacing_and_nodal_count_of_mandarin}

\subsection{Eigenvalue interlacing}

We begin working towards the proof of
theorem~\ref{thm:mandarin_mag_surplus_bounds} by establishing some
eigenvalue interlacing results, which are heavily based on the
inequalities of \cite[sect.~3.1]{BK_graphs}.  The following lemma was
proved as part of lemma 4.3 of \cite{BanBerSmi_ahp12}.

\begin{lemma}
  \label{lem:separated_graph}
  Let $\Gamma$ be a compact connected graph and let $\Lambda = \lambda_n(\Gamma)$ be a simple eigenvalue of $\Gamma$ with the corresponding eigenfunction $f$. Let $\Gamma_c$ be a graph which is the union of $k+1$ connected components such that the following are true:
    \begin{enumerate}
      \item $\Gamma$ can be obtained from $\Gamma_c$ by $k$ operations of gluing a pair of vertices together and adding the parameters $\chi$ of their $\delta$-type conditions.
      \item The function $f$ restricted to any of the components of $\Gamma_c$ is an eigenfunction of that component.
    \end{enumerate}
   Then
  \begin{align}
    \label{eq:deg_eig_components}
    &\lambda_{n-1}(\Gamma_c) < \Lambda < \lambda_{n+k+1}(\Gamma_c),\\
    &\lambda_n(\Gamma_c) = \lambda_{n+1}(\Gamma_c)
    = \ldots = \lambda_{n+k}(\Gamma_c) = \Lambda.
  \end{align}
\end{lemma}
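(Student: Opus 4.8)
The plan is to deduce both displays from the eigenvalue bracketing/interlacing inequalities for $\delta$-type vertex conditions collected in \cite[sect.~3.1]{BK_graphs}, applied along the chain of $k$ gluing operations that produces $\Gamma$ from $\Gamma_c$. The key structural fact to exploit is hypothesis~(2): the restriction $f|_{\Gamma_c}$ is, componentwise, an eigenfunction with the \emph{same} eigenvalue $\Lambda$, so $\Lambda$ is an eigenvalue of $\Gamma_c$ of multiplicity at least $k+1$ (one contribution from each of the $k+1$ components, the eigenfunctions being linearly independent because they have disjoint supports). Hence for \emph{some} index $m$ we have $\lambda_m(\Gamma_c) = \ldots = \lambda_{m+k}(\Gamma_c) = \Lambda$; the real content of the second display is that $m=n$, and the content of the first display is that the inequalities at the two ends are \emph{strict}.

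First I would pin down the index. Gluing two vertices and adding their $\chi$-parameters is a rank-one-type perturbation of the quadratic form in the sense of \cite[sect.~3.1]{BK_graphs}: removing it (passing from $\Gamma$ back toward $\Gamma_c$) enlarges the form domain by dropping one continuity constraint, so each eigenvalue can only decrease or stay the same, and by the standard interlacing it decreases past at most one index per operation. Concretely, one glued pair gives $\lambda_{j-1}(\text{unglued}) \le \lambda_j(\text{glued}) \le \lambda_j(\text{unglued})$ for all $j$; iterating over the $k$ operations yields
\begin{equation*}
  \lambda_{n-k}(\Gamma_c) \le \lambda_n(\Gamma) \le \lambda_n(\Gamma_c),
  \qquad
  \lambda_n(\Gamma) \le \lambda_{n+k}(\Gamma_c).
\end{equation*}
Combining $\lambda_n(\Gamma_c) \ge \Lambda$ (from the multiplicity-$(k+1)$ statement, $\lambda_n(\Gamma_c)$ could a priori be below the block, but the left inequality $\lambda_{n-k}(\Gamma_c)\le\Lambda$ together with the block of $k+1$ equal eigenvalues at value $\Lambda$ forces the block to occupy exactly positions $n,\dots,n+k$) with the inequalities above locates the constant block precisely at indices $n$ through $n+k$, giving the second display and the non-strict version of the first.

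The main obstacle — and the step I expect to require the genuine argument — is upgrading the two outer inequalities in \eqref{eq:deg_eig_components} to \emph{strict} ones, i.e. showing $\lambda_{n-1}(\Gamma_c) < \Lambda$ and $\Lambda < \lambda_{n+k+1}(\Gamma_c)$. Equivalently, $\Lambda$ has multiplicity \emph{exactly} $k+1$ in $\Gamma_c$, not more. Here I would use the simplicity of $\Lambda$ as an eigenvalue of $\Gamma$: any $\Lambda$-eigenfunction of $\Gamma_c$, after the $k$ gluings, either survives as a $\Lambda$-eigenfunction of $\Gamma$ (if it happens to be continuous across all glued pairs with matching $\chi$-conditions) or does not descend to $\Gamma$ at all. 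If the multiplicity on $\Gamma_c$ exceeded $k+1$, a dimension count on the constraint map "$g \mapsto$ (jumps of $g$ and of the weighted derivative sums at the $k$ glued vertices)," which drops rank by at most $k$, would produce a $\Lambda$-eigenfunction of $\Gamma_c$ descending to a nonzero $\Lambda$-eigenfunction of $\Gamma$ that is linearly independent from $f$ — contradicting simplicity. I would also invoke that this is exactly (a special case of) part of the argument given for lemma~4.3 of \cite{BanBerSmi_ahp12}, so the strictness can be quoted from there; the sketch above indicates why it holds. A small additional point to check is connectedness of $\Gamma$ entering only to guarantee $\Lambda$ has a well-defined simple-eigenvalue eigenfunction $f$ as assumed.
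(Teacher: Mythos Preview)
The paper itself does not prove this lemma; it only quotes it from \cite{BanBerSmi_ahp12}. So your proposal cannot be compared to ``the paper's proof,'' only evaluated on its own merits.

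Your overall strategy is sound, and your dimension-counting argument for the \emph{exact} multiplicity $k+1$ is correct (the $k$ continuity constraints at the glued pairs are the only obstruction to descending, since the derivative conditions follow automatically once continuity holds). However, there is a concrete error in the interlacing step that makes the index-location argument fail.

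You correctly say that ungluing enlarges the form domain and hence lowers eigenvalues, but then you write the interlacing as $\lambda_{j-1}(\text{unglued}) \le \lambda_j(\text{glued}) \le \lambda_j(\text{unglued})$. The right-hand inequality is backwards: since the glued form domain is the codimension-one subspace, the correct bracketing is
\[
  \lambda_j(\text{unglued}) \le \lambda_j(\text{glued}) \le \lambda_{j+1}(\text{unglued}).
\]
Iterating $k$ times gives $\lambda_n(\Gamma_c)\le\Lambda\le\lambda_{n+k}(\Gamma_c)$, not the inequalities you display. Moreover, even with the corrected inequalities (and even granting multiplicity exactly $k+1$), these two bounds only force the block to sit somewhere in the range $m\in\{n-k,\dots,n+k\}$; they do not pin down $m=n$.

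The clean fix is to bring in the neighbouring eigenvalues of $\Gamma$ and use simplicity directly in the interlacing, rather than only afterwards. From the corrected bracketing applied at indices $n-1$ and $n+1$,
\[
  \lambda_{n-1}(\Gamma_c) \le \lambda_{n-1}(\Gamma) < \Lambda
  \qquad\text{and}\qquad
  \Lambda < \lambda_{n+1}(\Gamma) \le \lambda_{n+k+1}(\Gamma_c),
\]
which are exactly the strict inequalities in \eqref{eq:deg_eig_components}. Since the $k+1$ restrictions of $f$ to the components give $k+1$ linearly independent $\Lambda$-eigenfunctions of $\Gamma_c$, the $k+1$ positions $n,\dots,n+k$ squeezed between these strict bounds must all equal $\Lambda$. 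With this route the dimension-counting argument becomes unnecessary (though it remains a valid alternative for the strictness).
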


The next lemma we formulate in somewhat greater generality than that which will be
necessary in subsequent derivations.

\begin{lemma}
  \label{lem:interlacing_maxD}
  Let $\Gamma$ be a graph with a Neumann vertex $v$ of degree $d$
  whose removal separates the graph into $d$ disjoint subgraphs.  We
  denote its edge set by $E_{v}$.  Let $r\leq d$ be a non-negative integer. For a subset $E_{D}$ of $E_{v}$, with $\left|E_{D}\right|=r$,
  define $\Gamma_{E_{D}}$ to be the modification of the graph $\Gamma$
  obtained by imposing the Dirichlet condition at $v$ for edges from
  $E_{D}$ and leaving the edges from $E_{v}\setminus E_{D}$ connected
  at $v$ with the Neumann condition (see
  figure~\ref{fig:max_dirichlet} for an example).  Then
  \begin{equation}
    \lambda_{n-1}(\Gamma)
    \leq \min_{|E_{D}|=r}\lambda_{n}(\Gamma_{E_{D}})
    \leq \lambda_{n}(\Gamma)
    \leq \max_{|E_{D}|=r}\lambda_{n}(\Gamma_{E_{D}})
    \leq \lambda_{n+1}(\Gamma).
    \label{eq:max_Dirichlet}
  \end{equation}
\end{lemma}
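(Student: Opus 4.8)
\emph{Setup.} All the operators involved --- $H^{0}(\Gamma)$ and the $H^{0}(\Gamma_{E_{D}})$ --- act on the same Hilbert space $\bigoplus_{e}L^{2}(e)$ with the same quadratic form $h[f]=\sum_{e}\int_{e}(|f'|^{2}+q|f|^{2})$, and differ only through their form domains inside $\widetilde{H}^{1}(\Gamma,\C)$, i.e.\ through the linear matching conditions imposed at $v$. Since the asserted inequalities are non-strict and all the eigenvalues that occur depend continuously on the metric, I would first perturb the edge lengths so that $\Gamma$ and every $\Gamma_{E_{D}}$ becomes generic and, moreover, $\Lambda:=\lambda_{n}(\Gamma)$ is a \emph{simple} eigenvalue of $\Gamma$ that is not an eigenvalue of any $\Gamma_{E_{D}}$ nor a Dirichlet eigenvalue of any of the $d$ subgraphs $\Gamma_{1},\dots,\Gamma_{d}$ into which deleting $v$ decomposes $\Gamma$; the stated inequalities then follow by letting the perturbation tend to zero. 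Write $\Gamma_{i}^{D}$ for the $i$-th subgraph with a Dirichlet condition at $v$, and let $\Gamma_{0}$ be $\Gamma$ with the Dirichlet condition imposed at $v$ on \emph{all} of $E_{v}$; by the separating hypothesis $\Gamma_{0}=\Gamma_{1}^{D}\sqcup\dots\sqcup\Gamma_{d}^{D}$, and $\mathcal{D}(\Gamma_{0})$ sits inside $\mathcal{D}(\Gamma)$ and inside each $\mathcal{D}(\Gamma_{E_{D}})$ with codimension exactly one (the extra scalar constraint being that the surviving common boundary value at $v$ vanishes).

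\emph{The outer inequalities.} These come at once from the codimension-one interlacing of Courant--Fischer type \cite[sect.~3.1]{BK_graphs}: $\mathcal{D}(\Gamma_{0})\subset\mathcal{D}(\Gamma)$ gives $\lambda_{m}(\Gamma)\le\lambda_{m}(\Gamma_{0})\le\lambda_{m+1}(\Gamma)$ for all $m$, and $\mathcal{D}(\Gamma_{0})\subset\mathcal{D}(\Gamma_{E_{D}})$ gives $\lambda_{m}(\Gamma_{E_{D}})\le\lambda_{m}(\Gamma_{0})\le\lambda_{m+1}(\Gamma_{E_{D}})$. Chaining, $\lambda_{n-1}(\Gamma)\le\lambda_{n-1}(\Gamma_{0})\le\lambda_{n}(\Gamma_{E_{D}})$ and $\lambda_{n}(\Gamma_{E_{D}})\le\lambda_{n}(\Gamma_{0})\le\lambda_{n+1}(\Gamma)$ for every $E_{D}$, which yields $\lambda_{n-1}(\Gamma)\le\min_{|E_{D}|=r}\lambda_{n}(\Gamma_{E_{D}})$ and $\max_{|E_{D}|=r}\lambda_{n}(\Gamma_{E_{D}})\le\lambda_{n+1}(\Gamma)$. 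As a byproduct, since $\Lambda$ is not an eigenvalue of $\Gamma_{0}$ one gets $\lambda_{n-1}(\Gamma_{0})<\Lambda<\lambda_{n}(\Gamma_{0})$, i.e.\ $\Gamma_{0}$ has exactly $n-1$ eigenvalues below $\Lambda$; this count is the key input below.

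\emph{The inner inequalities} --- that $\min_{|E_{D}|=r}\lambda_{n}(\Gamma_{E_{D}})\le\lambda_{n}(\Gamma)\le\max_{|E_{D}|=r}\lambda_{n}(\Gamma_{E_{D}})$ --- are the real content, and a purely variational argument does not seem to suffice; I would argue through Dirichlet-to-Neumann functions, assuming from now on $0<r<d$ (the case $r=0$ is trivial). For each $i$ let $\Lambda_{i}(\lambda)$ be the ratio (derivative into $e_{i}$ at $v$)/(value at $v$) of the solution of $-g''+qg=\lambda g$ on $\Gamma_{i}$ obeying all interior matching conditions --- the vertex $m$-function of $\Gamma_{i}$ at $v$. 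A standard Wronskian computation shows each $\Lambda_{i}$ is Herglotz: between consecutive poles, which are precisely the eigenvalues of $\Gamma_{i}^{D}$, it is strictly increasing and runs over all of $\R$. Off these poles, $\lambda$ is an eigenvalue of $\Gamma$ iff the secular function $S(\lambda):=\sum_{i=1}^{d}\Lambda_{i}(\lambda)$ vanishes, and $\lambda$ is an eigenvalue of $\Gamma_{E_{D}}$ iff $S_{E_{D}}(\lambda):=\sum_{i\notin E_{D}}\Lambda_{i}(\lambda)$ vanishes, the remaining eigenvalues of $\Gamma_{E_{D}}$ being those of the decoupled pieces $\Gamma_{i}^{D}$, $i\in E_{D}$, i.e.\ the poles of the discarded summands. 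Counting zeros against poles on each interval between consecutive poles --- using that the poles of $S_{E_{D}}$ are exactly the poles of $S$ contributed by the kept subgraphs, together with the ``$n-1$ eigenvalues of $\Gamma_{0}$ below $\Lambda$'' from the previous paragraph --- one finds that $\Gamma_{E_{D}}$ has exactly $n-1$ or $n$ eigenvalues below $\Lambda$, the latter occurring precisely when $S_{E_{D}}(\Lambda)>0$; equivalently $\lambda_{n}(\Gamma_{E_{D}})<\Lambda$ iff $S_{E_{D}}(\Lambda)>0$. The main obstacle is exactly this bookkeeping --- pinning down the indices from the interlacing of zeros and poles of $S$ and $S_{E_{D}}$.

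\emph{Conclusion.} Since $\Lambda$ is an eigenvalue of $\Gamma$ we have $\sum_{i=1}^{d}\Lambda_{i}(\Lambda)=S(\Lambda)=0$, so $S_{E_{D}}(\Lambda)=-\sum_{i\in E_{D}}\Lambda_{i}(\Lambda)$ and the criterion becomes: $\lambda_{n}(\Gamma_{E_{D}})<\Lambda$ iff $\sum_{i\in E_{D}}\Lambda_{i}(\Lambda)<0$. Put $\beta_{i}:=\Lambda_{i}(\Lambda)$; by the genericity normalization every partial sum of the $\beta_{i}$ is nonzero. Because $\sum_{i}\beta_{i}=0$, the sum of the $r$ smallest $\beta_{i}$ is at most $\frac{r}{d}\sum_{i}\beta_{i}=0$, hence strictly negative, while the sum of the $r$ largest is $\ge0$, hence strictly positive. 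Taking $E_{D}$ to consist of the indices of the $r$ smallest $\beta_{i}$ thus gives $\lambda_{n}(\Gamma_{E_{D}})<\Lambda$, so $\min_{|E_{D}|=r}\lambda_{n}(\Gamma_{E_{D}})<\lambda_{n}(\Gamma)$; taking the $r$ largest gives $\lambda_{n}(\Gamma_{E_{D}})>\Lambda$, so $\max_{|E_{D}|=r}\lambda_{n}(\Gamma_{E_{D}})>\lambda_{n}(\Gamma)$. Undoing the perturbation replaces these strict inequalities by the non-strict ones in the statement.
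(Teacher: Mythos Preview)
Your proof is correct, and the underlying selection of $E_{D}$ --- picking the $r$ edges with extremal values of $f'_{e}(v)/f(v)$ --- is exactly the same choice the paper makes. The routes to the conclusion, however, differ in both halves.

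For the \emph{outer} inequalities, the paper first disconnects the $r$ edges while keeping them glued at a new Neumann vertex (producing an intermediate $\Gamma'$), and then raises that Neumann to Dirichlet; two separate interlacing steps are chained. Your argument instead passes through the single full-Dirichlet graph $\Gamma_{0}$ and uses that $\mathcal{D}(\Gamma_{0})$ has codimension one in both $\mathcal{D}(\Gamma)$ and $\mathcal{D}(\Gamma_{E_{D}})$. This is cleaner and has the bonus of producing the exact eigenvalue count for $\Gamma_{0}$ that you need later.

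For the \emph{inner} inequalities, the paper works operator-surgically: it realises $f$ as an eigenfunction of a disconnected graph $\Gamma'_{E_{D}}$ with suitable $\delta$-parameters (via lemma~\ref{lem:separated_graph}, which is where the separating hypothesis enters), and then pushes the parameters monotonically to the Neumann/Dirichlet endpoints. Your approach is spectral-theoretic: you encode the eigenvalue problems as zeros of the Herglotz secular functions $S$ and $S_{E_{D}}$, and translate the question into the sign of $S_{E_{D}}(\Lambda)=-\sum_{i\in E_{D}}\beta_{i}$. The separating hypothesis enters for you through the very definition of the branch $m$-functions $\Lambda_{i}$. Your approach gives a sharper dichotomy (it tells you, for \emph{every} $E_{D}$, on which side of $\Lambda$ the eigenvalue falls), at the cost of the zero/pole bookkeeping you flag as ``the main obstacle''. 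That step is indeed completable: since each $\Lambda_{i}\to-\infty$ as $\lambda\to-\infty$ and $S_{E_{D}}$ is Herglotz with simple poles exactly at the eigenvalues of $\bigsqcup_{i\notin E_{D}}\Gamma_{i}^{D}$, zeros and poles of $S_{E_{D}}$ strictly alternate beginning with a zero, so $(\#\text{zeros}-\#\text{poles})$ below $\Lambda$ equals $1$ if $S_{E_{D}}(\Lambda)>0$ and $0$ otherwise; adding back the decoupled Dirichlet pieces $\Gamma_{i}^{D}$, $i\in E_{D}$, and comparing with the $n-1$ eigenvalues of $\Gamma_{0}$ below $\Lambda$ gives exactly your criterion. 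It would be worth spelling this out explicitly.
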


\begin{figure}
  \includegraphics[scale=0.8]{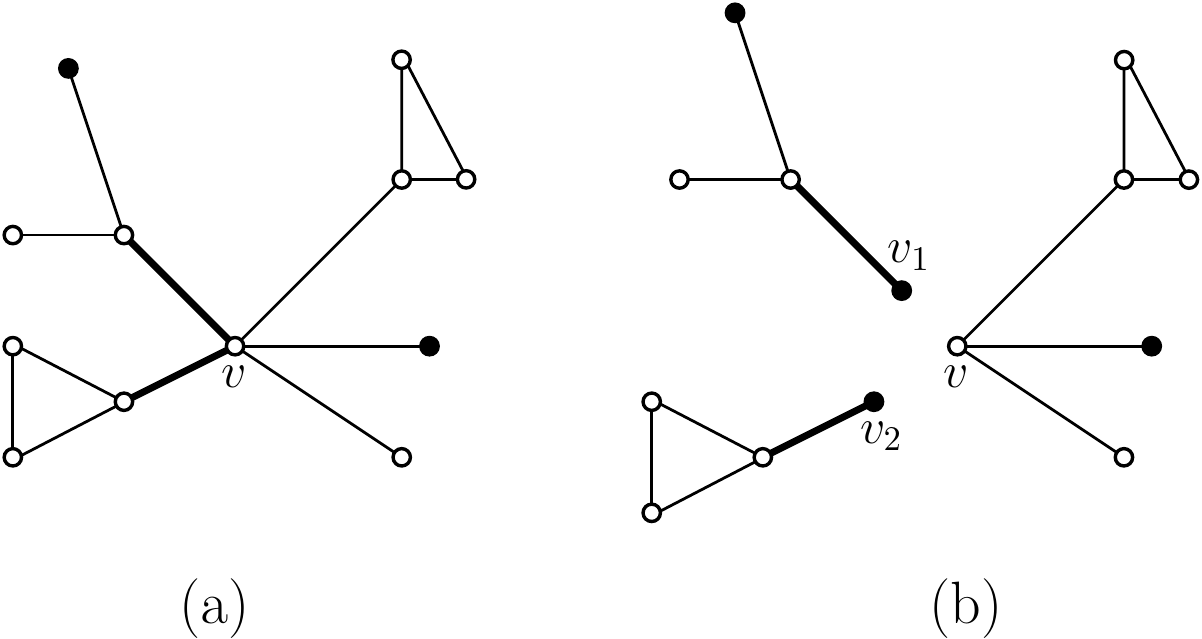}
  \caption{The original graph $\Gamma$, (a), and the resulting graph $\Gamma_{E_{D}}$,
    (b), when the set $E_{D}$ comprises of the $r=2$ edges joining $v$
    from the left (shown in thicker lines). Vertices with Neumann conditions
    are shown as empty circles, and Dirichlet conditions are indicated by
    filled circles.}
  \label{fig:max_dirichlet}
\end{figure}

\begin{proof}
  Without loss of generality we will assume that the eigenvalue
  $\lambda_n(\Gamma)$ is simple.  Indeed, one can resolve multiplicity
  by an arbitrarily small perturbation of the edge lengths
  \cite{Fri_ijm05}.  Since the $n$-th eigenvalue is a continuous
  function of the lengths \cite[thm 3.1.11]{BK_graphs}, if the
  (non-strict) inequality is true for simple eigenvalues, it remains
  true when passing to the limit of zero perturbation.

  \emph{The outside inequalities:} Start from the graph $\Gamma$ and
  disconnect any $r$ edges keeping them joined together with the
  Neumann condition and denoting the resulting graph by $\Gamma'$.
  From \cite[thm 3.1.11]{BK_graphs}  we get
  \begin{equation*}
    \lambda_{n-1}\left(\Gamma\right)
    \leq \lambda_{n}\left(\Gamma'\right)
    \leq \lambda_{n+1}\left(\Gamma'\right)
    \leq \lambda_{n+1}\left(\Gamma\right).
  \end{equation*}
  Now, changing the vertex condition from Neumann to Dirichlet
  disconnects the $r$ edges and results in
  $\lambda_{n}\left(\Gamma'\right) \leq
  \lambda_{n}\left(\Gamma_{E_{D}}\right)\leq\lambda_{n+1}\left(\Gamma'\right)$
  \cite[thm 3.1.8]{BK_graphs}.

  \emph{The lower bound for the maximum:} We describe the choice of $E_{D}$ that will
  fulfill the inequality. We start with the the $n$-th eigenfunction
  $f$ of $\Gamma$ (normalizing so that $f(v)>0$).  We now choose the $r$ edges
  $e$ with the largest values of $f_{e}'(v)$ to be in $E_{D}$ and get
  \begin{equation*}
    \sum_{e\in E\setminus E_{D}}f_{e}'(v)\leq0.
  \end{equation*}
  Consider $f$ on the graph $\Gamma_{E_{D}}'$ obtained by
  disconnecting the edges from the set $E_{D}$ at the vertex $v$. We
  supply $\Gamma_{E_{D}}'$ with conditions that ensure that $f$ is an
  eigenfunction. Namely, the function $f$ satisfies the $\delta$-type
  condition at the vertex $v$ with some coefficient
  $\chi_{v}\leq0$. At the new vertices $v_{1}$, $v_{2}$ etc. (see
  figure~\ref{fig:max_dirichlet}) some $\delta$-type conditions are
  also satisfied. The sum of the parameters of those $\delta$-type
  conditions together with $\chi_{v}$ is zero. From
  lemma~\ref{lem:separated_graph} we have
  \begin{equation*}
  \lambda_{n}(\Gamma_{E_{D}}')=\lambda_{n}(\Gamma).
  \end{equation*}
  Now we modify the conditions of $\Gamma_{E_{D}}'$, increasing
  $\chi_v$ to zero and the $\delta$-type parameters at the new
  vertices of degree one to $\infty$ (Dirichlet condition), obtaining
  the graph $\Gamma_{E_{D}}$.  All these operations increase the
  eigenvalues \cite[thm 3.1.8]{BK_graphs}; in particular,
  \begin{equation*}
  \lambda_{n}(\Gamma_{E_{D}})\geq\lambda_{n}(\Gamma_{E_{D}}')=\lambda_{n}(\Gamma).
  \end{equation*}

  \emph{The upper bound for the minimum:} 
  In this case, we choose the $r$ edges
  $e$ with the smallest values of $f_{e}'(v)$ to be in $E_{D}$ and get
  \begin{equation*}
    \sum_{e\in E\setminus E_{D}}f_{e}'(v) \geq 0,
  \end{equation*}
  and, therefore, $\chi_{v}\geq0$.  From
  lemma~\ref{lem:separated_graph} we have
  \begin{equation*}
    \lambda_{n+r}(\Gamma_{E_{D}}')=\lambda_{n}(\Gamma).
  \end{equation*}
  Now we \emph{decrease} $\chi_v$ to 0, resulting in
  \begin{equation*}
    \lambda_{n+r}(\Gamma_{E_{D}}'') \leq \lambda_{n+r}(\Gamma_{E_{D}}'),
  \end{equation*}
  and then increase the parameters of the new vertices of degree one to
  $\infty$, obtaining
  \begin{equation*}
    \lambda_{n}(\Gamma_{E_{D}}) \leq \lambda_{n+r}(\Gamma_{E_{D}}'').
  \end{equation*}
  The desired result is an obvious chaining of the obtained inequalities.
\end{proof}

\begin{remark} The condition in lemma~\ref{lem:interlacing_maxD} that removing a vertex of degree $d$ separates the graph into $d$ disjoint subgraphs is necessary for the lower bound in \eqref{eq:max_Dirichlet}. This is used in the proof above when applying lemma~\ref{lem:separated_graph}, and indeed there exist examples where the bound does not hold when the condition is not satisfied.
\end{remark}

\begin{remark}
  In the case of a graph which is ``spherically-symmetric'' around
  vertex $v$ (i.e. the $d$ disjoint subgraphs obtained by removing
  the vertex $v$ are identical), we immediately get
  \begin{equation*}
    \lambda_{n}(\Gamma_{E_{D}}) = \lambda_{n}(\Gamma),
  \end{equation*}
  for all $n$ and $E_D$.  A special case of this is proved and used in a
  recent work of Demirel-Frank \cite{Dem_prep15}.
\end{remark}


\begin{lemma} \label{lem:DNtree_mixed} Let $\Gamma_{DN}$ be a tree
graph with zero potential, Dirichlet conditions at $t$ of its
leaves, and Neumann conditions at all of its other leaves and at all
of its internal vertices. Let $\Gamma_{ND}$ be the same graph with
Dirichlet conditions changed to Neumann and vice versa at all leaves.
If $k+t-1>0$, then
\begin{equation}
\lambda_{k}(\Gamma_{DN})\leq\lambda_{k+t-1}(\Gamma_{ND}).\label{eq:DNtree_mixed}
\end{equation}
\end{lemma}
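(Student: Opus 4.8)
The plan is to relate both $\Gamma_{DN}$ and $\Gamma_{ND}$ to an intermediate graph in which the leaf conditions are ``mixed'' so that one can walk from one to the other by single-leaf changes, each governed by the rank-one interlacing of \cite[thm 3.1.8]{BK_graphs}. Concretely, let $L_D$ and $L_N$ denote the sets of leaves carrying Dirichlet and Neumann conditions in $\Gamma_{DN}$, so $|L_D| = t$. Passing from $\Gamma_{DN}$ to $\Gamma_{ND}$ amounts to changing every Dirichlet leaf to Neumann (there are $t$ of these) and every Neumann leaf to Dirichlet. The key quantitative input is that relaxing a Dirichlet condition at a single leaf to Neumann can decrease each eigenvalue by at most one index, i.e. $\lambda_{k}(\Gamma) \le \lambda_{k+1}(\Gamma')$ where $\Gamma'$ has the relaxed condition; tightening Neumann to Dirichlet only increases eigenvalues (does not decrease the index). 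These are exactly the standard bracketing statements for $\delta$-type conditions on a single vertex of degree one.

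First I would change the $t$ Dirichlet leaves of $\Gamma_{DN}$ to Neumann one at a time, obtaining an intermediate tree $\widetilde\Gamma$ with Neumann at all of $L_D \cup L_N$ and Neumann internal vertices; each step shifts the eigenvalue index down by at most one, so after $t$ steps
\begin{equation*}
\lambda_{k}(\Gamma_{DN}) \le \lambda_{k+t}(\widetilde\Gamma).
\end{equation*}
Then I would change the leaves in $L_N$ from Neumann to Dirichlet, again one at a time; since tightening a vertex condition only raises eigenvalues, $\lambda_{k+t}(\widetilde\Gamma) \le \lambda_{k+t}(\Gamma_{ND})$. Chaining gives $\lambda_{k}(\Gamma_{DN}) \le \lambda_{k+t}(\Gamma_{ND})$, which is \emph{weaker} by one index than \eqref{eq:DNtree_mixed}. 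The extra factor of $1$ that must be recovered is precisely where the tree hypothesis and the ``all internal vertices Neumann'' hypothesis enter: for a connected graph with at least one Neumann leaf, the all-Neumann operator $\widetilde\Gamma$ has $\lambda_1 = 0$ with constant eigenfunction, while $\Gamma_{DN}$ with $t\ge 1$ Dirichlet leaves has $\lambda_1 > 0$. So the crude index shift of $t$ at the first stage is not tight: the index $k=0$ eigenvalue (there is none) need not be accounted for, and more carefully the constant function of $\widetilde\Gamma$ is ``new'' and can be discarded, improving $\lambda_{k+t}(\widetilde\Gamma)$ to $\lambda_{k+t-1}(\widetilde\Gamma)$ when it is compared against a graph that does not see the constant.

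The cleanest way to make this precise is to run the argument at the level of quadratic forms on the tree with $L_D \cup L_N$ all Neumann, noting that on a tree the only way a low eigenvalue can appear is through the constant (the Laplacian on a tree with all-Neumann leaves has a one-dimensional kernel and no other ``topological'' low modes since $\beta = 0$), so that the min-max count of eigenvalues below a threshold for $\Gamma_{DN}$ versus $\Gamma_{ND}$ differs by exactly $t-1$ rather than $t$ once one subtracts off the trivial mode. Alternatively one can induct on $t$: the base case $t=1$ (with $k\ge 1$, or $k=0$ forbidden by $k+t-1>0$) reduces to the single-leaf bracketing $\lambda_k(\Gamma_{DN}) \le \lambda_k(\Gamma_{ND})$ since changing one Dirichlet-leaf to Neumann \emph{and} one Neumann-leaf to Dirichlet can be done so the net index change is zero (the decrease from relaxing one leaf is offset because the relaxed tree still has only the constant as its extra low mode and the subsequent tightening removes it). The inductive step peels off one Dirichlet/Neumann swap. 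I expect the base case $t=1$ to carry the real content; once it is established, the general case follows by iterating the $t=1$ inequality along a chain of trees that differ by one leaf-label swap, carefully tracking that each swap contributes exactly the index increment $+1$ claimed in \eqref{eq:DNtree_mixed}.

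The main obstacle, then, is not the monotonicity inequalities themselves — those are immediate from \cite[sect.~3.1]{BK_graphs} — but bookkeeping the index shift tightly: showing that the combined operation of turning a Dirichlet leaf Neumann and a Neumann leaf Dirichlet shifts indices by exactly $+1$ (not $0$ and not $+2$), and that this uses the tree structure (to rule out extra low-lying modes) together with $k+t-1>0$ (to ensure the index one is actually attained and the constant mode is genuinely present to be ``spent''). I would organize the final write-up around the $t=1$ lemma plus an induction, keeping the quadratic-form viewpoint in reserve in case the direct bracketing chain leaves an off-by-one gap.
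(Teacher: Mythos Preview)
Your proposal has a genuine gap, and it is precisely the one the paper's remark after the lemma warns about: naive single-leaf interlacing can only give $\lambda_k(\Gamma_{DN})\le\lambda_{k+t}(\Gamma_{ND})$, and your mechanism for recovering the missing index does not work.

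Concretely, your route goes through the all-Neumann tree $\widetilde\Gamma$ and then tries to argue $\lambda_k(\Gamma_{DN})\le\lambda_{k+t-1}(\widetilde\Gamma)$ by ``discarding the constant''. But this inequality is already false for a single interval with $t=1$: there $\Gamma_{DN}=I_{DN}$, $\widetilde\Gamma=I_{NN}$, and one would need $\lambda_k(I_{DN})\le\lambda_k(I_{NN})$, i.e.\ $\big(\tfrac{\pi(2k-1)}{2l}\big)^2\le\big(\tfrac{\pi(k-1)}{l}\big)^2$, which fails for every $k\ge1$. The lemma itself holds here (trivially, since $\Gamma_{ND}=I_{ND}=I_{DN}$), but your intermediate comparison does not. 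The point is that the gain of one index comes not from the $D\to N$ relaxations alone but from the \emph{subsequent} $N\to D$ tightenings on the other leaves; routing through $\widetilde\Gamma$ throws that away. Your proposed induction on $t$ is also miscalibrated: the passage $\Gamma_{DN}\to\Gamma_{ND}$ flips \emph{all} leaves, so the ``base case'' $t=1$ still involves $d-1$ Neumann-to-Dirichlet changes and is not a single swap; there is no natural way to peel off one pair and land on a smaller instance of the same lemma.

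The paper's proof proceeds by a completely different induction --- on the number of \emph{internal} vertices, not on $t$. The base case is an interval, where the lemma holds with equality. For the inductive step one picks an internal Neumann vertex $v$ of degree $d$ and applies the refined interlacing lemma~\ref{lem:interlacing_maxD} with $r=d-2$: this disconnects $r$ edges at $v$ (imposing Dirichlet there) while keeping two joined, and crucially guarantees $\lambda_k(\Gamma_{DN})\le\lambda_k(\Gamma')$ for a suitable choice of disconnected edges. The resulting $\Gamma'$ is a disjoint union of $d-1$ trees, each with strictly fewer internal vertices; applying the induction hypothesis to each component and summing spectral counting functions gives exactly the $t-1$ shift, because the $d-2$ new Dirichlet leaves created at $v$ are matched by the $d-1$ components (so $\sum(t_i-1)=t-1$). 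Finally one re-glues to obtain $\Gamma_{ND}$. The sharpness thus comes from lemma~\ref{lem:interlacing_maxD}, which is strictly stronger than what rank-one bracketing alone provides.
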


\begin{remark} Naively applying the standard interlacing results
(see, e.g. \cite[thm 3.1.8]{BK_graphs}) while changing the condition at
each leaf, we would get the weaker result
\begin{equation*}
\lambda_{k}(\Gamma_{DN})\leq\lambda_{k+t}(\Gamma_{ND}).
\end{equation*}
\end{remark}

\begin{remark} Taking the inequality above with $t=0$ gives an inequality
similar to that known for domains in $\R^{d}$. In general, if $\Omega\subset\R^{d}$
is a domain whose Neumann spectrum is discrete, then $\lambda_{k}^{\left(N\right)}(\Omega)\leq\lambda_{k-1}^{\left(D\right)}(\Omega)$
where the superscript $\left(N\right)\backslash\left(D\right)$ denotes
the Neumann$\backslash$Dirichlet spectrum \cite{Friedlander_rma91,Filonov_aia04}.
\end{remark}

\begin{proof}

We start by introducing the spectral counting function
\[
N\left(\Gamma;\lambda\right):=\left|\left\{ n\left|\lambda_{n}\left(\Gamma\right)\leq\lambda\right.\right\} \right|,
\]
which we will use to rephrase eigenvalue inequalities. In particular,
we have the following equivalence:
\begin{equation}
  \lambda_k\left(\Gamma_1\right) \leq
  \lambda_{k+s}\left(\Gamma_2\right)
  \mbox{ for all }k
  \qquad \Leftrightarrow \qquad
  N\left(\Gamma_2;\lambda\right) \leq N\left(\Gamma_1;\lambda\right) + s.
  \label{eq:DNtree_counting_func_equivalence}
\end{equation}
To go from left to right we observe that if $N(\Gamma_1;\lambda) = n$,
then $\lambda < \lambda_{n+1}(\Gamma_1) < \lambda_{n+s+1}(\Gamma_2)$,
by definition and left inequality, correspondingly.  Applying the
definition of $N$ once again, we get $N\left(\Gamma_2;\lambda\right)
\leq n+s$.  The left follows from the right by the substitution
$\lambda=\lambda_{k+s}\left(\Gamma_{2}\right)$.  Using the equivalence
above, we write \eqref{eq:DNtree_mixed} as
\begin{equation}
  N\left(\Gamma_{ND};\lambda\right) \leq N\left(\Gamma_{DN};\lambda\right)+t-1,
  \label{eq:DNtree_lem_phrased_as_counting}
\end{equation}
which is what we now prove using induction on the number of internal
vertices of $\Gamma_{DN}$. The starting point is to notice
that the lemma holds for intervals (no internal vertices) with either
Dirichlet or Neumann conditions at each of their endpoints. Denoting
intervals with different types of boundary conditions by $I_{DD},\, I_{NN},$ and $I_{DN}=I_{ND}$,
we have
\begin{align*}
  \lambda_{k}\left(I_{DD}\right)
  &= \lambda_{k+1}\left(I_{NN}\right)
  = \left(\frac{\pi}{l}k\right)^{2}\\
  \lambda_{k}\left(I_{NN}\right)
  &=\lambda_{k-1}\left(I_{DD}\right)
  = \left(\frac{\pi}{l}\left(k-1\right)\right)^{2}\\
  \lambda_{k}\left(I_{DN}\right)
  &=\lambda_{k}\left(I_{ND}\right)
  = \left(\frac{\pi}{2l}\left(2k-1\right)\right)^{2},
\end{align*}
where $l$ is the interval length. Therefore the lemma holds for an
interval, and for such graphs we even get an equality in \eqref{eq:DNtree_mixed}.
Assume that the lemma holds for all tree graphs which have no more
than $M$ internal vertices. Let $\Gamma_{DN}$ be a tree graph which
satisfies the lemma's conditions and has $M+1$ internal vertices.
Let $v$ be an internal vertex of $\Gamma_{DN}$ with degree $d$.
Note that $v$ has a Neumann condition as it is an internal
vertex. We start with $\lambda_{k}\left(\Gamma_{DN}\right)$ and apply
lemma~\ref{lem:interlacing_maxD} with $r=d-2$ on the vertex $v$
(choosing $E_{D}$ which results in the maximal eigenvalue). Denoting
the resulting graph by $\Gamma'$, we have from lemma~\ref{lem:interlacing_maxD} that
\begin{equation}
\lambda_{k}\left(\Gamma_{DN}\right)\leq\lambda_{k}\left(\Gamma'\right),\label{eq:}
\end{equation}
or equivalently from \eqref{eq:DNtree_counting_func_equivalence},
\begin{equation}
  N\left(\Gamma';\lambda\right) \leq N\left(\Gamma_{DN};\lambda\right).\label{eq:DNtree_ineq1}
\end{equation}
 The resulting graph, $\Gamma'$, is a union of $d-1$ tree graphs,
$\Gamma'=\cup_{i=1}^{d-1}\Gamma_{DN}^{\left(i\right)}$. Note that
$d-2$ of these graphs have a Dirichlet leaf vertex which originates
from $v$, and therefore these graphs have less internal vertices
than $\Gamma_{DN}$. In one graph out of the $\Gamma_{DN}^{\left(i\right)}$,
however, $v$ becomes a Neumann vertex of degree $2$ and can be
absorbed into the edge. Therefore this graph
also has strictly less internal vertices than $\Gamma_{DN}$.  Denote
by $t_{i}$ the number of Dirichlet leaves of $\Gamma_{DN}^{\left(i\right)}$
and note that
\begin{equation*}
  \sum_{i=1}^{d-1} t_i = t + d - 2,
\end{equation*}
since we added $d-2$ Dirichlet vertices by splitting the vertex $v$.

We get by the induction assumption
\begin{equation}
  N\left(\Gamma_{ND}^{\left(i\right)};\lambda\right)
  \leq N\left(\Gamma_{DN}^{\left(i\right)};\lambda\right)+t_{i}-1,
  \label{eq:DNtree_ineq2}
\end{equation}
where $\Gamma_{ND}^{\left(i\right)}$ is the same as
$\Gamma_{DN}^{\left(i\right)}$, but with Dirichlet and Neumann
conditions at all leaves interchanged.

Denoting $\Gamma''=\cup_{i=1}^{d-1} \Gamma_{ND}^{(i)}$
and using the additivity of the spectral counting function we get
\begin{equation}
  N\left(\Gamma'';\lambda\right)
  = \sum_{i=1}^{d-1}N\left(\Gamma_{ND}^{(i)};\lambda\right)
  \leq
  \sum_{i=1}^{d-1}\left(N\left(\Gamma_{DN}^{(i)};\lambda\right)+t_i-1\right)
  = N\left(\Gamma';\lambda\right)+t-1.
  \label{eq:DNtree_ineq3}
\end{equation}
All that remains is to note that $\Gamma_{ND}$ is
obtained from $\Gamma''$ by gluing all vertices which originated
from $v$ and are equipped (in $\Gamma''$) with Neumann conditions;
such gluing increases the eigenvalue \cite[thm 3.1.11]{BK_graphs}, so
$\lambda_{k+t-1}\left(\Gamma''\right)\leq\lambda_{k+t-1}\left(\Gamma_{ND}\right)$
and therefore
\begin{equation}
  N\left(\Gamma_{ND};\lambda\right)
  \leq N\left(\Gamma'';\lambda\right).
  \label{eq:DNtree_ineq4}
\end{equation}
Combining inequalities \eqref{eq:DNtree_ineq1},\eqref{eq:DNtree_ineq3},\eqref{eq:DNtree_ineq4},
we get \eqref{eq:DNtree_lem_phrased_as_counting}, which proves the
induction step. \end{proof}

\subsection{Application to nodal count}

The interlacing results of the previous section allow one to prove that bi-dendral graphs have anomalous nodal counts: the nodal surplus of a bi-dendral graph with $d$ middle edges is never $\beta=d-1$ or $0$ (with the exception of
$\sigma_{1}$ which is $0$ for all graphs with Neumann conditions). We
will prove this fact for a slightly more general setup.

We consider the bi-dendral graph, some of whose middle edges can have an
intermediate point $v$ at which the anti-Neumann conditions
\eqref{eq:antiN_def} are enforced.  We can still ask the questions
about the number of zeros of the eigenfunction.  Note that there is
usually a sign change at the above $v$ but \emph{not} a zero.  It can
also happen that
$\left.f\right|_{e_{-}}(v)=\left.f\right|_{e_{+}}(v)=0$, but then, on
the contrary, $f$ does not change sign at $v$.  The $n$-th \emph{nodal
  surplus}, as before, is defined as the number of zeros of the $n$-th
eigenfunction minus $n-1$.

Placing an anti-Neumann condition at a point on the edge $e$ is unitarily
equivalent to placing a magnetic potential that integrates to $\pi$
on the edge $e$. In particular, without loss of generality we can
impose each anti-Neumann condition in the middle of the respective
edge.

We are now ready to prove theorem~\ref{thm:mandarin_mag_surplus_bounds},
which asserts that a generic bi-dendral graph
satisfies
\begin{equation*}
1\leq\sigma_{n}\leq\beta-1=d-2,\qquad n>1.
\end{equation*}

\begin{proof}[Proof of theorem~\ref{thm:mandarin_mag_surplus_bounds}]
  Let $a$ be the number of edges on which anti-Neumann conditions are
  enforced. Since the anti-Neumann conditions are symmetric and can be
  imposed in the middle of edges (their location along the edge does
  not affect the eigenvalues nor the nodal count), the graph still has
  up-down reflection symmetry (see
  figure~\ref{fig:mandarin_split_mixed}, left). Therefore, the
  spectrum can be decomposed into two halves, even and odd, consisting
  of eigenvalues whose eigenfunctions are symmetric and those with
  anti-symmetric eigenfunctions with respect to the reflection.

  \begin{figure}[t]
    \centering
    \includegraphics{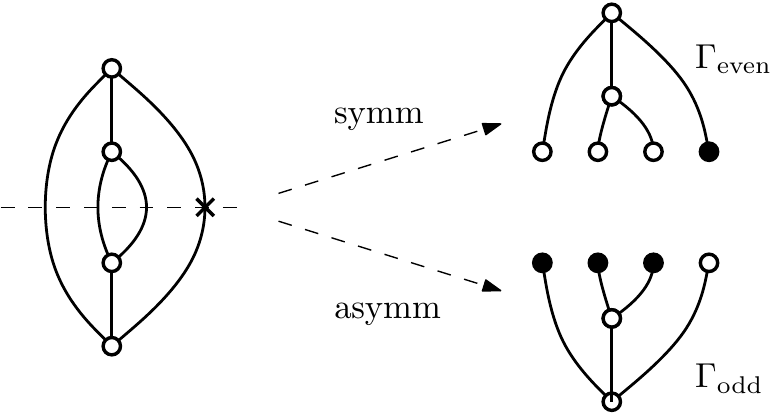}
    \caption{A bi-dendral graph with an
      anti-Neumann condition on the fourth middle edge, indicated by a
      cross. The symmetric eigenfunctions are eigenfunctions of the
      top tree graph on the right; the antisymmetric ones are the
      eigenfunctions of the bottom tree graph. Neumann conditions are
      shown as empty circles and Dirichlet conditions as filled
      circles.}
    \label{fig:mandarin_split_mixed}
  \end{figure}

  The symmetric eigenfunctions have Dirichlet conditions at the
  midpoints of anti-Neumann edges and zero derivative at the midpoints
  of the rest of the middle edges. There is a one-to-one
  correspondence between the symmetric eigenfunctions and the
  eigenfunctions of the upper half of the bi-dendral graph,
  $\Gamma_{\mathrm{even}}$, with the corresponding boundary conditions
  (see figure~\ref{fig:mandarin_split_mixed}, right). The
  antisymmetric eigenfunctions satisfy the opposite conditions and are
  in correspondence with the eigenfunction of the tree graph
  $\Gamma_{\mathrm{odd}}$.  Namely, if $\Gamma$ is a bi-dendral
  graph, then we have the spectral decomposition
  $\textrm{Spec}\left(\Gamma\right) =
  \textrm{Spec}\left(\Gamma_{\mathrm{even}}\right) \cup
  \textrm{Spec}\left(\Gamma_{\mathrm{odd}}\right)$.  The resulting
  pair of trees have opposite vertex conditions on the leaves
  and therefore fit the description in lemma~\ref{lem:DNtree_mixed}.

  Consider the eigenfunction corresponding to
  $\lambda_{k}(\Gamma_{\mathrm{even}})$, and assume that this
  eigenvalue is generic in the spectrum of $\Gamma$. As
  $\Gamma_{\mathrm{even}}$ is a tree graph, this eigenfunction has
  $k-1$ internal zeros (see Remark~\ref{rem:nodal_mag},
  part~\ref{itm:tree_count}) and $a$ zeros at the Dirichlet leaves.
  Its unfolding to the graph $\Gamma$ therefore has $2(k-1)+a$ zeros.
  Applying lemma~\ref{lem:DNtree_mixed} twice for comparison with the
  graph $\Gamma_{\mathrm{odd}}$ (which has $t=d-a$ Dirichlet leaves),
  we get
  \begin{equation*}
    \lambda_{k-d+a+1}(\Gamma_{\mathrm{odd}})
    < \lambda_{k}(\Gamma_{\mathrm{even}})
    < \lambda_{k+a-1}(\Gamma_{\mathrm{odd}}),
  \end{equation*}
  the second inequality being valid only when $k+a-1>0$, which is true
  due to our assumptions ($k>1$ or $a>0$).  The inequalities above are
  strict as we assumed $\lambda_{k}(\Gamma_{\mathrm{even}})$ to be
  simple when considered as an eigenvalue in the spectrum of
  $\Gamma$. The position of
  the eigenvalue $\lambda_{k}(\Gamma_{\mathrm{even}})$ in the spectrum
  of $\Gamma$ is equal to the number of eigenvalues of either $\Gamma_{\mathrm{even}}$ or $\Gamma_{\mathrm{odd}}$ that are smaller than or equal to $\lambda_{k}(\Gamma_{\mathrm{even}})$. From the inequalities above, we deduce that this position is between $2k-d+a+1$ and $2k-2+a$. The nodal surplus is
  then between
  \begin{equation*}
    2(k-1)+a-(2k-2+a-1)=1
  \end{equation*}
  and
  \begin{equation*}
    2(k-1)+a-(2k-d+a+1-1)=d-2.
  \end{equation*}

  The calculation for the antisymmetric eigenfunctions is identical
  modulo the change of $a$ to $d-a$, which has no effect on the final result.
\end{proof}

In the two simplest cases of mandarin graphs (we assume no anti-Neumann
conditions), we know all nodal counts exactly.

\begin{corollary}
  The nodal surplus count of a mandarin with 3 edges and no magnetic
  fluxes is $\{0,\ 1,\ 1,\ 1,\ldots\}$. The nodal surplus count of a
  mandarin with 4 edges and no magnetic fluxes is $\{0,\ 1,\ 2,\ 1,\
  2,\ 1,\ldots\}$.
\end{corollary}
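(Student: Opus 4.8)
\emph{Plan.} The $3$-mandarin is immediate from Theorem~\ref{thm:mandarin_mag_surplus_bounds}: here $d=3$ and $a=0$, so $\beta-1=d-2=1$, forcing $\sigma_n=1$ for every $n>1$, while $\sigma_1=0$ because the ground state of a graph with Neumann conditions has no zeros; this is $\{0,1,1,1,\ldots\}$. So all the work lies in the $4$-mandarin, where Theorem~\ref{thm:mandarin_mag_surplus_bounds} only narrows the surplus to $1\le\sigma_n\le 2$ for $n>1$, and the remaining task is to decide which of the two values occurs at each $n$.

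The idea is to reuse the even/odd decomposition from the proof of Theorem~\ref{thm:mandarin_mag_surplus_bounds} but to follow the \emph{parity} of $\sigma_n$ rather than its exact value. For a generic $4$-mandarin $\Gamma$ the symmetric part of the spectrum is that of a star $\Gamma_{\mathrm{even}}$ of four half-edges with Neumann conditions at the centre and all four leaves, and the antisymmetric part is that of the same star $\Gamma_{\mathrm{odd}}$ with Dirichlet conditions at the leaves; by genericity the two spectra are simple and disjoint. As recorded in the proof of Theorem~\ref{thm:mandarin_mag_surplus_bounds}, if $\lambda_n(\Gamma)$ is the $k$-th eigenvalue of $\Gamma_{\mathrm{even}}$ then its unfolded eigenfunction has $\phi_n=2(k-1)$ internal zeros, while if it is the $k$-th eigenvalue of $\Gamma_{\mathrm{odd}}$ the unfolding contributes in addition the four zeros sitting at the midpoints of the middle edges, so $\phi_n=2(k-1)+4$. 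In both cases $\phi_n$ is even, whence
\[
  \sigma_n=\phi_n-(n-1)\equiv n+1 \pmod 2,
\]
i.e.\ $\sigma_n$ has parity opposite to $n$. Combined with $1\le\sigma_n\le 2$ for $n>1$ this forces $\sigma_n=2$ for odd $n\ge 3$ and $\sigma_n=1$ for even $n$, which together with $\sigma_1=0$ is exactly $\{0,1,2,1,2,1,\ldots\}$. (One can even dispense with the $\le 2$ bound: the position estimates already in the proof of Theorem~\ref{thm:mandarin_mag_surplus_bounds} place the $k$-th eigenvalue of $\Gamma_{\mathrm{even}}$ at position $2k-3$ or $2k-2$ and the $k$-th of $\Gamma_{\mathrm{odd}}$ at position $2k+1$ or $2k+2$, which with the zero counts above yields $\sigma_n\in\{1,2\}$ with the value pinned by the parity of $n$.)

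The one step requiring care—the main obstacle—is justifying the clean counts $\phi_n=2(k-1)$ and $\phi_n=2(k-1)+4$, i.e.\ that unfolding neither creates nor destroys zeros and that the unfolded function is itself a generic eigenfunction of $\Gamma$. Concretely, one must rule out that a symmetric eigenfunction vanishes at a midpoint of a middle edge and that either kind of eigenfunction vanishes at a vertex of $\Gamma$. This is precisely where the genericity hypothesis enters: a zero of a quotient eigenfunction at a midpoint is accompanied by a vanishing derivative there (forced by the Neumann, resp.\ Dirichlet, condition at that leaf), so by uniqueness for the Cauchy problem along the incident half-edge the eigenfunction would vanish at the central vertex of the star, that is, at a vertex of $\Gamma$, contradicting Definition~\ref{def:genericity}. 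Once this bookkeeping is in place, the rest is the arithmetic displayed above.
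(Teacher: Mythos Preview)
Your argument is correct. For the $3$-mandarin you match the paper verbatim. For the $4$-mandarin both you and the paper reduce to the observation that $\phi_n$ is always even, which together with $\sigma_n\in\{1,2\}$ forces the alternating pattern; the difference lies only in how that evenness is established.

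The paper's route is considerably more direct: since a mandarin has exactly two vertices and the eigenfunction is nonzero at both (genericity), the number of sign changes along each of the four edges has a fixed parity determined by whether the values at the two vertices agree or not; four numbers of the same parity sum to an even number. This avoids the symmetry decomposition, the tree nodal count, and the bookkeeping about midpoint zeros that you carry out. Your approach, by contrast, reuses the even/odd machinery from the proof of Theorem~\ref{thm:mandarin_mag_surplus_bounds} and tracks the explicit zero counts $2(k-1)$ and $2(k-1)+4$. It is a perfectly valid argument and has the mild advantage of generalising without change to any even $d$ (your counts become $2(k-1)$ and $2(k-1)+d$, both even), whereas the paper's two-vertex parity argument also works for any even $d$ but is specific to mandarins rather than general bi-dendral graphs. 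The trade-off is that your version requires the extra genericity check at midpoints, which you handle correctly via the Cauchy uniqueness argument.
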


\begin{proof}
  The nodal surplus of the first eigenfunction is always $0$ (it does
  not change sign). The nodal count of 3-mandarin follows immediately
  from theorem~\ref{thm:mandarin_mag_surplus_bounds} with $d=3$.

  Consider now the 4-mandarin graph.  The number of sign changes along
  \emph{every} edge of the graph must have the same parity: the parity
  is fully determined by the relative signs of the eigenfunction on
  the two vertices of the graph.  In either case the total number of
  zeros is even. From theorem~\ref{thm:mandarin_mag_surplus_bounds},
  the (non-trivial) nodal surplus of the 4-mandarin is either $1$ or
  $2$, but the even number of zeros forces the alternating pattern.
\end{proof}

\section{Critical points of the dispersion relation of infinite
  quantum graphs\label{sec:critical_points_periodic_graphs}}

\subsection{Dispersion relation and magnetic Laplacian}

Let $\G$ be an infinite $\Z^{k}$-periodic quantum graph with no
magnetic potential. The Schr\"odinger operator on $\G$ acts as
\begin{equation}
  \label{eq:shrod_on_qg}
  (Hf)(x)=-f''(x)+q(x)f(x),
\end{equation}
where $q(x)\in\R$ is a bounded, periodic, piecewise continuous
function that represents electric potential.  The conditions on the
vertices of $\G$ are assumed to be of $\delta$-type
~\eqref{eq:vconditions} and are also periodic.

\begin{figure}
  \centering
  \includegraphics{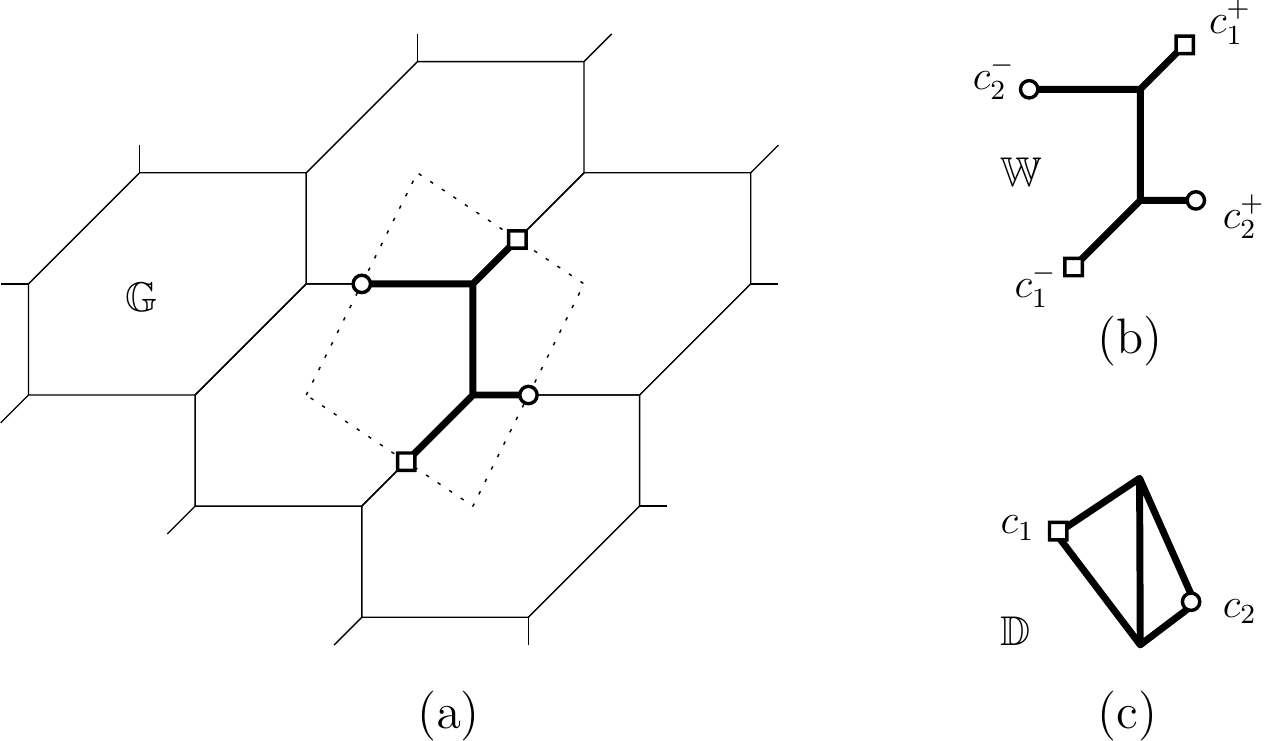}
  \caption{(a) A $\Z^{2}$-periodic graph with a chosen fundamental
    domain; (b) the chosen fundamental domain with quasi-identified vertices
    marked with the same shape; (c) the graph $\D$ formed by
    merging the quasi-identified vertices.}
  \label{fig:WtoD}
\end{figure}

Choose a fundamental domain $\W$, i.e. a connected subgraph of $\G$
which has one representative of each orbit of $\Z^{k}$ acting on $\G$
(see figure~\ref{fig:WtoD}(a) for an example).  We assume that the
boundary of $\W$ does not include any vertices of $\G$ (handling the
latter would introduce unnecessary notational difficulties).

Denote by $\left\{ s_{j}\right\} _{j=1}^{k}$ some choice of $k$
generators of $\Z^{k}$ acting on $\G$.  We call a pair of vertices, $\left(c_{j}^{+},c_{j}^{-}\right)$,
belonging to the boundary of $\W$, \emph{quasi-identified} if $s_{j}\left(c_{j}^{-}\right)=c_{j}^{+}$
(see figure~\ref{fig:WtoD}(b)).  We assume that there is only one such
pair for each $s_{j}$, $j=1,\ldots,k$.  Note that this condition may
depend on the choice of the fundamental domain.

Let $\avec=(\alpha_{1},\alpha_{2},\ldots,\alpha_{k})$ and on the graph
$\W$ define the operator
$H^{\avec}:\widetilde{H^{2}}(\W,\C)\rightarrow\widetilde{L^{2}}(\W,\C)$,
which acts as $-\frac{d^{2}}{dx^{2}}+q(x)$ on every edge, along with
the conditions
\begin{equation}
\begin{split}f(c_{j}^{-}) & =e^{i\alpha_{j}}f(c_{j}^{+}),\\
f'(c_{j}^{-}) & =-e^{i\alpha_{j}}f'(c_{j}^{+})
\end{split}
\label{eq:oldalphaconditions}
\end{equation}
at the quasi-identified vertices and the $\delta$-type vertex
conditions inherited from the graph $\G$ at all other vertices of
$\W$.

By Floquet-Bloch theory (see e.g., \cite{Kuc_floquet,BK_graphs}), we know that the
spectrum of $\G$ can be found by calculating the eigenvalues of
$H^{\avec}$ and taking the union over all $\avec$ in the
\emph{Brillouin zone}, the torus $(-\pi,\pi]^{k}$ of all possible
values of $\avec$.  In other words,
\begin{equation*}
  \sigma(H)=\bigcup_{\avec\in(-\pi,\pi]^{k}}\sigma(H^{\avec}).
\end{equation*}
The multi-valued function $\sigma(H^{\avec})$ is called the
\emph{dispersion relation}.

When the operator $H$ is real, as it is in our case, complex
conjugation transforms $H^{\avec}$ into $H^{-\avec}$, implying that
the dispersion relation $\sigma(H^{\avec})$ is symmetric with respect
to the inversion $\avec\mapsto-\avec$.  The fixed points of this
transformation are the vectors $\avec \in \{0,\pi\}^k$ with all
entries equal to either $0$ or $\pi$.  We call these vectors the
\emph{symmetry points} of the Brillouin zone.  We remark that if $\G$
has additional symmetries, a hierarchy of symmetry points may appear in
the Brillouin zone (see \cite{BerCom_prep14} for an example).
However, in this work we reserve the term for the vectors from
$\{0,\pi\}^k$ only.

The relation to a magnetic operator on a graph is explicated by the
following construction.  Beginning with the fundamental domain $\W$,
glue the vertices of each quasi-identified pair
$(c_{j}^{+},c_{j}^{-})$ to form a new vertex $c_{j}$ and denote the
resulting graph by $\D$ (see figure~\ref{fig:WtoD}(c)). When connecting
vertices, we do not change the edge lengths.  The following easy
result can be found, for example, in \cite[thm 2.6.1]{BK_graphs} or
\cite{KosSch_cmp03,Rue_prep11}.

\begin{lemma}
  \label{lemma: unit_equiv}
  The operator $H^{\avec}$ is unitarily equivalent to the operator
  $H^{A}:\widetilde{H^{2}}(\D,\C)\rightarrow\widetilde{L^{2}}(\D,\C)$,
  defined as $-\left(\frac{d}{dx}-iA(x)\right)^{2}+q(x)$ on every
  edge, with the vertex conditions
  \begin{equation*}
    \left\{ \begin{array}{l}
        g(x)\mbox{ is continuous at v},\\
        \displaystyle\sum_{e\in{E_{v}}}\left(\frac{d}{dx_e}-iA(v)\right)g(v)=\chi_{v}g(v),\qquad\chi_{v}\in\R,
      \end{array}\right.
  \end{equation*}
  where $A(x)$ is a one-form on $\D$ that satisfies
  \begin{equation*}
    \alpha_{j}=\int_{c_{j}^{-}}^{c_{j}^{+}}A(x)\mod2\pi
  \end{equation*}
  for any path on $\W$ between $c_{j}^{-}$ and $c_{j}^{+}$.

  The unitary equivalence is as follows. Choose an arbitrary point,
  $p,$ on $\W$. If $g$ is an eigenfunction of $H^{A}$, then
  $f:=ge^{-i\xi}$ is an eigenfunction of $H^{\avec}$, where
  \begin{equation*}
    \xi(x)=\int_{p}^{x}A(x).
  \end{equation*}
\end{lemma}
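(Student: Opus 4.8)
The plan is to realize the unitary equivalence explicitly as a gauge transformation, the standard device for turning a magnetic Schr\"odinger operator into one without magnetic potential (here, one with twisted matching conditions). The preparatory step is to pin down the one-form $A$ and a scalar primitive $\xi$ for it. Since $\G$ carries no magnetic potential, the only holonomy present is the one forced by the Floquet phase, so on $\D$ I would take $A$ to be any one-form whose flux through the cycle created by identifying $c_{j}^{-}$ with $c_{j}^{+}$ equals $\alpha_{j}$, and whose flux through every cycle already contained in $\W$ vanishes exactly (not merely mod $2\pi$). Such an $A$ exists: it amounts to prescribing a list of fluxes on a cycle basis of $\D$ adapted to the splitting $\beta(\D)=\beta(\W)+k$, which can always be done, and $A$ is then unique up to gauge. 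The vanishing of the $\W$-cycle fluxes is precisely what makes $\xi(x):=\int_{p}^{x}A$, integrated along any path in $\W$ from the fixed basepoint $p$ to $x$, a well-defined function on $\W$; and the same choice yields $\xi(c_{j}^{+})-\xi(c_{j}^{-})=\int_{c_{j}^{-}}^{c_{j}^{+}}A\equiv\alpha_{j}\pmod{2\pi}$ independently of the path, which is the compatibility condition asserted in the statement.

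I would then define $U\colon\widetilde{L^{2}}(\D,\C)\to\widetilde{L^{2}}(\W,\C)$ by $(Ug)(x)=e^{-i\xi(x)}g(x)$; the two $L^{2}$-spaces coincide because $\D$ and $\W$ have the same edge set, and $U$ is unitary since $|e^{-i\xi}|\equiv1$. The engine of the proof is the pointwise identity $e^{-i\xi}\bigl(\tfrac{d}{dx}-iA\bigr)\bigl(e^{i\xi}h\bigr)=h'$ on each edge, valid because $\xi'=A$ there; applied twice it gives $e^{-i\xi}\bigl(-(\tfrac{d}{dx}-iA)^{2}+q\bigr)\bigl(e^{i\xi}h\bigr)=-h''+qh$, so $U$ conjugates the action of $H^{A}$ on each edge into $-\tfrac{d^{2}}{dx^{2}}+q$. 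What remains is to check that $U$ maps $\mathrm{dom}(H^{A})$ onto $\mathrm{dom}(H^{\avec})$. At a vertex $v$ not involved in any identification, $\xi$ is single-valued and continuous, so writing $g=e^{i\xi}f$ makes continuity of $g$ at $v$ equivalent to continuity of $f$, and dividing the magnetic $\delta$-condition $\sum_{e\in E_{v}}(\tfrac{d}{dx_{e}}-iA(v))g(v)=\chi_{v}g(v)$ through by $e^{i\xi(v)}$ and using the identity above turns it into the ordinary $\delta$-condition $\sum_{e\in E_{v}}f_{e}'(v)=\chi_{v}f(v)$ inherited from $\G$. At an identified vertex $c_{j}$, the function $\xi$ has two values $\xi(c_{j}^{\pm})$ differing by $\alpha_{j}$ mod $2\pi$; continuity of $g=e^{i\xi}f$ across $c_{j}$ reads $e^{i\xi(c_{j}^{-})}f(c_{j}^{-})=e^{i\xi(c_{j}^{+})}f(c_{j}^{+})$, i.e.\ $f(c_{j}^{-})=e^{i\alpha_{j}}f(c_{j}^{+})$, and the derivative part of the $\delta$-condition at $c_{j}$ becomes, after the same division by the phase, $f'(c_{j}^{-})=-e^{i\alpha_{j}}f'(c_{j}^{+})$ --- exactly the conditions \eqref{eq:oldalphaconditions}. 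Running the computation through $U^{-1}$ supplies the opposite inclusion, so $U\,\mathrm{dom}(H^{A})=\mathrm{dom}(H^{\avec})$ and $UH^{A}U^{-1}=H^{\avec}$; in particular $g\mapsto f=e^{-i\xi}g$ sends eigenfunctions of $H^{A}$ to eigenfunctions of $H^{\avec}$ with the same eigenvalue.

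The only genuinely delicate point is the bookkeeping around $\xi$: one must make sure a gauge for $A$ can be chosen so that $A$ is flux-free on every cycle of $\W$, for it is exactly this that turns $\xi$ into a bona fide function on $\W$ and, simultaneously, forces $\xi(c_{j}^{+})-\xi(c_{j}^{-})$ to be $\alpha_{j}$ rather than some other representative of its class mod $2\pi$. Everything afterwards --- the edge identity and the vertex-by-vertex matching of conditions --- is routine, and the statement is in any case essentially contained in the references \cite{BK_graphs,KosSch_cmp03,Rue_prep11} cited just before it; the reason for recording the explicit $U$ is that it is the map we shall later use to transport eigenfunctions, and their nodal data, between the two pictures.
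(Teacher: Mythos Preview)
Your proof is correct and is precisely the standard gauge-transformation argument. Note, however, that the paper does not supply its own proof of this lemma: it is stated as a known fact with references to \cite[thm~2.6.1]{BK_graphs} and \cite{KosSch_cmp03,Rue_prep11}, so there is no ``paper's proof'' to compare against. What you have written is essentially what one finds in those references.

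One small remark: your insistence that $A$ be chosen with \emph{exactly} zero flux (not merely $0$ mod $2\pi$) through the cycles of $\W$ is more than is needed. If the $\W$-cycle fluxes are only $0$ mod $2\pi$, then $\xi(x)=\int_{p}^{x}A$ is multi-valued, but $e^{-i\xi(x)}$ is still single-valued on $\W$, and that is all the map $U$ requires. Likewise $\xi(c_{j}^{+})-\xi(c_{j}^{-})$ is then only determined mod $2\pi$, but since it enters only through $e^{i\alpha_{j}}$ in the matching conditions, nothing is lost. So the ``delicate bookkeeping'' you flag is in fact harmless, and your argument goes through for any $A$ satisfying the hypothesis of the lemma, not just the particular gauge you single out.
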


\begin{remark}
  The magnetic flux $\alpha_{j}$ is path independent because there is
  no other magnetic potential on our graph. The integral of $A(x)$
  around a cycle in $\W$ (i.e., a cycle that has no magnetic
  potential) is zero. Since the sign of $A(x)$ (and $\alpha_{j}$)
  depends on direction, the integral around part of a cycle in one
  direction is equal to the integral around the rest of the cycle
  traversed in the opposite direction.  The same applies to the phase
  $\xi(x)$.
\end{remark}

Lemma~\ref{lemma: unit_equiv}, in particular, shows that the
eigenvalues of $H^\avec$ do not depend on the local changes to the
choice of the fundamental domain $\W$.  Since $|f|=|g|$ and hence
$f(x)=0$ if and only if $g(x)=0$, we can assume without loss of
generality that eigenfunctions are non-zero at $c_{j}$, as well as at $c_{j}^{\pm}$.

Now we define another operator based on the graph $\W$ by specifying
different conditions at the quasi-identified vertices.  Let
$\boldsymbol{\gamma}=(\gamma_{1},\gamma_{2},\ldots,\gamma_{k})$ and
define the operator
$H^{\gvec}:\widetilde{H^{2}}(\W,\C)\rightarrow\widetilde{L^{2}}(\W,\C)$,
which acts as $-\frac{d^{2}}{dx^{2}}+q(x)$ on every edge, along with
the Robin vertex conditions
\begin{equation*}
\begin{split}f'(c_{j}^{+}) & =\gamma_{j}f(c_{j}^{+}),\\
f'(c_{j}^{-}) & =-\gamma_{j}f(c_{j}^{-})
\end{split}
\end{equation*}
at the quasi-identified vertices and the same conditions as $\G$ at all
other vertices.  It turns out that extremal points of
$\lambda_n(\avec)$ manifest themselves as multiple eigenvalues of $H^{\gvec}$.

\begin{theorem}
  \label{thm:quantumdegeneracy}
  Suppose the infinite periodic quantum graph $\G$ has no magnetic
  potential and the fundamental domain $\W$ has only one
  quasi-identified vertex pair in each direction.  Let
  $\lambda_{n}{(\avec)}$ have a critical point $\avecc$ that is not at
  a symmetry point of the Brillouin zone (i.e., $\exists j$ such that
  $\alpha_{j}^{*}\neq0,\pi$), and suppose that the eigenvalue
  $\lambda_{n}(\avecc)$ is simple with corresponding eigenfunction $f$.

  Then
  \begin{enumerate}
  \item $\gamma_{j}^{*}=\frac{f'(c_{j}^{+})}{f(c_{j}^{+})}$ is real
    for all $j$,
  \item $\lambda=\lambda_{n}(\avecc)$ is a degenerate eigenvalue of
    $H^{\boldsymbol{\gamma^{*}}}$, and
  \item if, additionally, $\W$ is a tree, there exists an internal
    vertex of $\W$, of degree three or higher, such that the
    eigenfunction $f$ is zero at this vertex. \label{enu:quantumdegeneracy-c}
  \end{enumerate}
\end{theorem}

To prove theorem~\ref{thm:quantumdegeneracy} we first collect some
auxiliary useful facts.

\subsection{Critical points of the dispersion relation}
\label{sec:critical_degeneracy}

\begin{lemma}
  \label{lemma: constant}
  If $f$ is an eigenfunction of the operator \eqref{eq:shrod_on_qg}, then
  $\im(f'(x)\overline{f(x)})$ is constant on each edge of the graph.
\end{lemma}

\begin{proof}
  We start by calculating the Wronskian of the functions $f$ and $\cc{f}$:
  \begin{align*}
    W(f,\overline{f}) & =f'(x)\overline{f(x)}-f(x)\overline{f'(x)}\\
    & =f'(x)\overline{f(x)}-\overline{\overline{f(x)}f'(x)}
    = 2i\im(f'(x)\overline{f(x)}).
  \end{align*}
  On the other hand, both $f$ and $\cc{f}$ are solutions to the
  differential equation $-y''(x)+(q(x)-\lambda)y(x)=0$ on any edge
  and, by Abel's Theorem, their Wronskian is constant.
\end{proof}

We note that the value of the Wronskian changes from one edge to
another.  If all vertex conditions are Neumann, the Wronskian defines
a flow on the graph.  This and other facts about the Wronskian on
graphs can be found in \cite{BerWey_ptrsa13, Wey_phd14}.

In what follows, we will need to differentiate the function
$\lambda_{n}(\avec)$.  This is allowed since $\lambda_{n}(\avecc)$ is
a simple eigenvalue of $H^{\avecc}$, and the function
$\lambda_{n}(\avec)$ is analytic around
$\avec=\avecc$~\cite{Kato_book} (see also
\cite{BerKuc_incol12,BK_graphs} for a discussion of this fact for
quantum graphs).

\begin{lemma}
  \label{lemma: real4}
  Suppose that the infinite periodic quantum graph $\G$ satisfies the
  conditions of theorem~\ref{thm:quantumdegeneracy}.  If $\avecc$ is a
  critical point of $\lambda_{n}(\avec)$ and $\lambda_{n}(\avecc)$ is
  a simple eigenvalue, then the eigenfunction $f$ of $H^\avec$
  corresponding to $\lambda_{n}(\avecc)$ satisfies
  \begin{equation}
    f'(c_{j}^{+})\overline{f(c_{j}^{+})}\in\R
   \label{eq: alphareal}
  \end{equation}
   for all $j=1,2,\ldots,k$.
\end{lemma}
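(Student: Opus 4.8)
The strategy is to use the Hellmann--Feynman-type formula for the derivative of $\lambda_n(\avec)$ with respect to each $\alpha_j$, and show that the derivative being zero forces the real quantity in \eqref{eq: alphareal} to vanish in imaginary part. First I would write down the variational (or perturbation-theoretic) expression for $\partial\lambda_n/\partial\alpha_j$ at $\avec=\avecc$, where the only $\alpha_j$-dependence in the operator $H^\avec$ enters through the coupling conditions \eqref{eq:oldalphaconditions} at the single quasi-identified pair $(c_j^+,c_j^-)$. Differentiating the quadratic form $\langle H^\avec f, f\rangle = \lambda_n(\avec)\|f\|^2$ in $\alpha_j$ and using $f'(c_j^-) = -e^{i\alpha_j}f'(c_j^+)$, $f(c_j^-) = e^{i\alpha_j}f(c_j^+)$, the boundary terms that survive should combine into something proportional to $\im\big(f'(c_j^+)\overline{f(c_j^+)}\big)$ (up to a positive normalization constant $\|f\|^2$). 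Setting this equal to zero because $\avecc$ is a critical point then gives $\im\big(f'(c_j^+)\overline{f(c_j^+)}\big) = 0$, i.e. \eqref{eq: alphareal}.

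More concretely, I would proceed as follows. Let $f = f(\avec)$ be the (locally analytic, by Kato's theory as already noted) normalized eigenfunction. Starting from $\int_\W \big(|f'|^2 + q|f|^2\big) = \lambda_n(\avec)$, I would differentiate both sides in $\alpha_j$. On the right we get $\partial_{\alpha_j}\lambda_n$. On the left, integrating by parts on each edge, the bulk terms cancel against $\lambda_n\partial_{\alpha_j}\int|f|^2 = 0$ (normalization) exactly as in the standard proof of the Feynman--Hellmann formula, and one is left with the boundary contributions at $c_j^{\pm}$ coming from differentiating the constraint \eqref{eq:oldalphaconditions}. Carefully bookkeeping the signs and the directions of the derivatives at $c_j^-$ vs.\ $c_j^+$, and using $|f(c_j^-)| = |f(c_j^+)|$, the result should be
\begin{equation*}
  \frac{\partial\lambda_n}{\partial\alpha_j}(\avecc)
  = C\,\im\!\big(f'(c_j^+)\overline{f(c_j^+)}\big)
\end{equation*}
for a nonzero (in fact positive) constant $C$; since $\avecc$ is a critical point the left side is zero, giving the claim. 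An alternative, perhaps cleaner, route is to pass to the magnetic picture of Lemma~\ref{lemma: unit_equiv}: there $\alpha_j$ is a flux, and the derivative of an eigenvalue with respect to a flux is the integral of the probability current around the corresponding cycle, which (by Lemma~\ref{lemma: constant}, $\im(f'\overline f)$ constant on each edge) reduces to a multiple of $\im\big(f'(c_j^+)\overline{f(c_j^+)}\big)$; vanishing of the derivative again yields \eqref{eq: alphareal}.

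**The main obstacle.** The routine-but-delicate part is the sign/normalization bookkeeping at the quasi-identified pair: the vertex conditions \eqref{eq:oldalphaconditions} are not self-adjoint term-by-term in a naive sense (the factor $e^{i\alpha_j}$ appears with a minus sign in the derivative condition), so one must differentiate the constraint correctly and track how $\partial_{\alpha_j} f$ enters the boundary terms — in particular verifying that the contributions from $\partial_{\alpha_j} f$ itself cancel (they must, by self-adjointness of $H^{\avecc}$ and the eigenvalue equation) leaving only the explicit $\alpha_j$-derivative of the coupling. I expect the cleanest bookkeeping comes from working with the sesquilinear form and noting that for $h$ in the form domain of $H^{\avecc}$ one has $\langle H^{\avecc} f, h\rangle = \langle f, H^{\avecc} h\rangle$, so the only net contribution is the explicit derivative of the boundary pairing, which is manifestly $\im\big(f'(c_j^+)\overline{f(c_j^+)}\big)$ up to a constant. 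Once that identity is in hand, the lemma is immediate from $\avecc$ being critical.
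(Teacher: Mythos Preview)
Your proposal is correct. In fact, the paper's proof is precisely your ``alternative, perhaps cleaner, route'': it passes to the unitarily equivalent magnetic operator $H^A$ via Lemma~\ref{lemma: unit_equiv}, applies the Hellmann--Feynman theorem with a compactly supported perturbation $B_j$ of the magnetic potential localized near $c_j$, and then uses Lemma~\ref{lemma: constant} (constancy of $\im(f'\overline f)$ on an edge) to evaluate the resulting integral, obtaining
\[
  \frac{\partial\lambda_n}{\partial\alpha_j}(\avecc) = -2\,\im\!\left(f'(c_j^+)\overline{f(c_j^+)}\right).
\]

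Your primary route --- differentiating the Rayleigh quotient of $H^{\avec}$ directly and tracking the boundary terms at $c_j^{\pm}$ coming from the $\alpha_j$-dependence of the constraint \eqref{eq:oldalphaconditions} --- also works and gives the same formula (up to sign convention). Its virtue is that it stays on $\W$ and never invokes the magnetic gauge; the price is exactly the bookkeeping you flagged, since the $\alpha_j$-dependence lives in the form domain rather than in the operator. The paper's route trades that bookkeeping for a fixed domain and a parameter-dependent operator, where Hellmann--Feynman applies mechanically and Lemma~\ref{lemma: constant} does the remaining work. Either argument yields the lemma immediately once the derivative formula is established.
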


\begin{proof}
  We will show that
  \begin{equation}
    \label{eq:diff_eig_alpha}
    \frac{\partial \lambda}{\partial \alpha_j} = -2 \im\left(f'(c_{j}^{+})\overline{f(c_{j}^{+})}\right),
  \end{equation}
  which directly implies \eqref{eq: alphareal} since $\avecc$ is a
  critical point.

  By lemma~\ref{lemma: unit_equiv}, $H^\avec$ and $H^A$ have the same eigenvalues so
  \begin{equation*}
  \left.\frac{\partial \lambda(\avec)}{\partial \alpha_j} \right|_{\avec = \avecc} = \left.\frac{d}{dt} \lambda(\avecc+t\boldsymbol{\delta}\avec_j) \right|_{t=0}
    =\left.\frac{d}{dt} \lambda(A^*+tB) \right|_{t=0}
  \end{equation*}
  where $\boldsymbol{\delta}\avec_j = (0,  \ldots, \delta\alpha_j,  \ldots , 0)$ and $B(x)$ is any continuous function that satisfies
  \begin{equation*}
    \int_{c_{k}^{-}}^{c_{k}^{+}}B(x) \mod 2\pi
    = \left\{
\begin{array}{ll}
     \delta\alpha_j & k = j  \\
     0 & k \neq j
\end{array}\right. .
  \end{equation*}
  Denote by $e_{j}$ the edge of $\D$ which contains $c_{j}$.
In particular, we choose $\delta\alpha_j = 1$ and a function $B_j(x)$ that is
  compactly supported on edge $e_j$ near the point $c_j$ and satisfies
  \begin{equation}
    \label{eq:choice_of_B}
    \int_{c_{k}^{-}}^{c_{k}^{+}}B_j(x) \mod 2\pi
    = \delta_{kj}
  \end{equation}
where $\delta_{kj}$ is the Kronecker delta function.

  Let $g_{t}$ be an eigenfunction of norm one corresponding to
  $\lambda(A^{*}+tB_j)$.  We define $g:=g_{0}$ and get from the
  Hellmann-Feynman Theorem~\cite{Feynman} that
  \begin{equation*}
    \left.\frac{d}{d t}\lambda(A^*+tB_j)\right|_{t=0}
    = \left. \left( \frac{d}{dt} H^{A^*+tB_j}\right|_{t=0} g,g\right).
  \end{equation*}

  One can calculate that
  \begin{align*}
    \left.\frac{d}{dt}H^{A^{*}+t{B_{j}}}\right|_{t=0}
    & = \left.-\frac{d}{dt}\left(\frac{d}{dx}-i(A^*(x)+tB_{j}(x))\right)^{2}\right|_{t=0}\\
    & = 2iB_{j}(x)\frac{d}{dx} + iB_{j}'(x) + 2A^*(x)B_{j}(x).
  \end{align*}
  Since $B_j(x)$ is supported on the edge $e_j$ only, we get
  \begin{align*}
    \frac{\partial \lambda(\avec)}{\partial \alpha_j}
    & = \int_{e_j}iB_j'(x)g(x)\overline{g(x)}\,\mathrm{d}x
    + \int_{e_j}\left(2iB_j(x)g'(x)\overline{g(x)}
      + 2A^*(x)B_j(x)|g(x)|^2\right)\,\mathrm{d}x\\
    & = \int_{e_j}\left(-iB_j(x)\frac{d}{dx}
      \left(g(x)\overline{g(x)}\right)
      + 2iB_j(x)g'(x)\overline{g(x)} + 2A^*(x)B_j(x)|g(x)|^2\right)\,\mathrm{d}x,
  \end{align*}
  using integration by parts (the boundary terms disappear due to the
  support of $B_j(x)$).  Continuing, this gives us
  \begin{align*}
    \frac{\partial \lambda(\avec)}{\partial \alpha_j}
    &= \int_{e_j} B_j(x) \left[-ig(x)\overline{g'(x)} +
      A^*(x)g(x)\overline{g(x)} + ig'(x)\overline{g(x)} + A^*(x)g(x)\overline{g(x)}\right]\,\mathrm{d}x\\
    & =\int_{e_j} 2 B_j(x)\im\left[\left(-g'(x)+iA^*(x)g(x)\right) \overline{g(x)}\right]\,\mathrm{d}x,
  \end{align*}
  where moving from the first line to the second we use the fact
  that $A^*(x)$ is real. By lemma~\ref{lemma: unit_equiv}, we know that
  since $g$ is an eigenfunction corresponding to $\lambda(A^*)$,
  $f=ge^{-i\xi}$ is an eigenfunction corresponding to
  $\lambda(\avecc)$, and therefore
  \begin{equation*}
    \im\left[\left(g'(x)-iA^*(x)g(x)\right)\overline{g(x)}\right]
    = \im\left(f'(x)\overline{f(x)}\right).
  \end{equation*}
  However, by lemma~\ref{lemma: constant} the latter value is a constant on
  the edge $e_j$, and using \eqref{eq:choice_of_B} we get
  \begin{equation*}
    \frac{\partial \lambda(\avec)}{\partial \alpha_j}
    = -2 \im\left(f'(x)\overline{f(x)}\right) \int_{e_j} B_j(x)
    \,\mathrm{d}x
    = -2 \im\left(f'(x)\overline{f(x)}\right).
  \end{equation*}
\end{proof}

We are now ready to prove the main theorem of the previous subsection.

\begin{proof}[Proof of theorem~\ref{thm:quantumdegeneracy}]
  Let $f$ be an eigenfunction of $H^{\avecc}$ (with $\avecc$ not at a
  symmetry point of the Brillouin zone), and let
  \begin{equation*}
    \gamma_{j}^{*}=\frac{f'(c_{j}^{+})}{f(c_{j}^{+})}, \qquad j=1,\ldots,k.
  \end{equation*}
  According to lemma~\ref{lemma: real4}, all $\gamma_{j}^{*}$'s are real, and
  consequently the operator $H^{\gvecc}$ is self-adjoint and real.

  It is easy to see that $f$ is an eigenfunction of $H^{\gvecc}$.
  However, $f$ cannot be made real since it satisfies~\eqref{eq:oldalphaconditions} and we assumed that there exists $j$ with
  $\alphac_{j}\neq0,\pi$.  This is not a contradiction only if the
  real and imaginary parts of $f$ are both eigenfunctions of
  $H^{\gvecc}$, in which case $\lambda$ must be a degenerate
  eigenvalue of $H^{\gvecc}$.

  Furthermore, if $\W$ is a tree, we can apply
  \cite[cor~3.1.9]{BK_graphs}, which says that if an eigenvalue
  of a tree is degenerate there exists an internal vertex of degree
  three or higher at which all eigenfunctions from the eigenspace
  vanish.
\end{proof}

Theorem~\ref{thm:magnetic_cp_condition} now follows.
 \begin{proof}[Proof of theorem~\ref{thm:magnetic_cp_condition}]
 Let $\Gamma$ be a graph with first Betti number equal to $\beta$ and denote by $\avec\in(-\pi,\pi]^{\beta}$ the total fluxes through some choice of cycles of the graph that form a
basis of its fundamental group. Therefore, one may cut the graph at $\beta$ positions to make it a tree graph. The obtained tree graph serves as a fundamental domain $\W$ (with respect to translations) of an infinite $\Z^{\beta}$-periodic quantum graph, $\G$, as in theorem~\ref{thm:quantumdegeneracy}. The statement in theorem~\ref{thm:magnetic_cp_condition} now follows from theorem~\ref{thm:quantumdegeneracy}\eqref{enu:quantumdegeneracy-c}, realizing that the quasi-momenta of $\G$ are exactly the magnetic fluxes of $\Gamma$.
 \end{proof}

\subsection{Zeros and touching bands for mandarin graphs}

In light of theorem~\ref{thm:quantumdegeneracy}, we observe that a
special role is played by eigenfunctions that are zero on at least one
vertex. We now apply this observation to mandarin graphs.

\begin{proof}[Proof of theorem~\ref{thm:mandarin_magnetic_cp}]
  We will now show that if we have an extremum in the dispersion
  relation of the mandarin graph, it is due to touching bands.

  Assume the contrary: an extremum of $\lambda_{n}(\avec)$ occurs at
  $\avec=\avecc$ and $\lambda_{n}(\avecc)$ is simple. The minimum of
  $\lambda_{1}$ always happens at the point $\avec=(0,0,\ldots)$ and
  we exclude it from further considerations.

  First we argue that the eigenfunction $f$ corresponding to
  $\lambda_{n}(\avecc)$ must vanish at a vertex of the graph.  Indeed,
  if $\avecc$ is not a symmetry point, the claim follows directly from
  theorem~\ref{thm:magnetic_cp_condition}.  On the other hand, if
  $\avecc$ is a symmetry point and is non-vanishing on the vertices,
  by combining theorems~\ref{thm:mandarin_mag_surplus_bounds} and
  \ref{thm:nodal_mag}, we conclude that $\avecc$ is a saddle point,
  which contradicts it being an extremum.

  Now, without loss of generality, assume that $f$ vanishes on the top
  vertex of the $d$-mandarin graph. In addition, $f$ satisfies Neumann conditions there.  We can shift the magnetic condition to
the top vertex, resulting in
\begin{align}
 & f_{1}(v)=f_{2}(v)=\ldots=f_{d}(v)=0,\label{eq:top_Dirichlet}\\
 & e^{i\alpha_{1}}f_{1}'(v)+\ldots+e^{i\alpha_{d-1}}f_{d-1}'(v)+f_{d}'(v)=0,\label{eq:top_Neumann}
\end{align}
the standard Neumann condition at the bottom vertex, and a non-magnetic
operator acting on the edges. Ignoring, for a moment, condition \eqref{eq:top_Neumann},
we get a standard star graph with $d$ edges and Dirichlet conditions
at the boundary vertices. Such a graph, for generic choice of lengths, has
  eigenfunctions that do not vanish on the central vertex. Hence, we assume that $f$ is not equal to zero at the bottom vertex.

  Apply the top-down (vertical) reflection $F$ to the function $f$ (see
  figure~\ref{fig:mandarin_split_mixed} (left)), followed by complex
  conjugation. It is immediate to check that the new function, $\cc{Ff}$,
  satisfies the same eigenvalue problem with
  $\avec=\avecc$ as the function $f$. It is, however,
  a different function: $f$ vanishes on the top vertex and \emph{does
    not} vanish on the bottom; the function $\cc{Ff}$ does the
  opposite. We conclude that $\lambda_{n}(\avecc)$ is a multiple
  eigenvalue.
\end{proof}

\appendix
\section{Discrete Graphs}
\label{sec:discrete_graphs}

In this section we consider the analogues for discrete graphs of some
of the theorems proved above, namely
theorems~\ref{thm:mandarin_mag_surplus_bounds},
\ref{thm:magnetic_cp_condition}, and \ref{thm:mandarin_magnetic_cp}.

As for the mandarin graphs, one can consider their discrete analogues
by placing several intermediate degree-two vertices per edge.
However, the anomaly of the nodal count
(theorem~\ref{thm:mandarin_mag_surplus_bounds}) is only partially
exhibited in this case. More specifically, in numerical experiments we
saw that the nodal surplus stays anomalous (i.e., $\sigma_{n}\neq0$
and $\sigma_{n}\neq\beta$) only in the bottom half of the
spectrum. Increasing the number of intermediate points per edge, one
can approximate any given eigenfunction of the quantum graph, but with
a discrete eigenfunction that stays ``low'' in the spectrum, so these
observations do not contradict
theorem~\ref{thm:mandarin_mag_surplus_bounds}.

We introduce the relevant definitions for discrete graphs in section
\ref{sub:introduction_to_discrete} and discuss the extrema of
dispersion relations of infinite periodic discrete graphs in
section~\ref{sub:Periodic-discrete-graphs}.

\subsection{Introduction to discrete graphs\label{sub:introduction_to_discrete}}

Let $\Gamma=(V,E)$ be a simple connected finite graph with vertex
set $V$ and edge set $E$. We define the Schr\"odinger operator
with the potential $q:V\to\R$ by
\begin{equation}
  H:\C^{|V|}\to\C^{|V|},
  \qquad
  \left(Hf\right)\left(u\right)
  = -\sum_{v\sim u}f\left(v\right)+q\left(u\right)f\left(u\right).
  \label{eq:discr_schrod}
\end{equation}
That is, the matrix $H$ is
\begin{equation}
  H=Q-C,
  \label{eq:matrix_H}
\end{equation}
where $Q$ is the diagonal matrix of site potentials $q\left(u\right)$
and $C$ is the adjacency matrix of the graph. It is perhaps more usual
(and physically motivated) to represent the Hamiltonian as $H=Q+L$,
where the Laplacian $L$ is given by $L=D-C$ with $D$ being the
diagonal matrix of vertex degrees.  Since we will not be imposing
any restrictions on the potential $Q$, we absorb the matrix $D$ into
$Q$. The operator $H$ has $|V|$ eigenvalues, which we number in
increasing order as before.

The nodal count of a (real) eigenfunction $f_{n}$ is defined as the
number of edges on which the eigenfunction changes sign, i.e.,
\begin{equation*}
  \phi_{n}=\Big|\{(u,v)\in E\,:\, f_{n}(u)f_{n}(v)<0\}\Big|.
\end{equation*}
This count is most relevant for eigenfunctions which do not vanish at
vertices. Otherwise, there exists alternative definitions
\cite{BiyLeySta_book,BanOreSmi_incoll08}, but they are not relevant
for the results in this paper.

We define the magnetic Hamiltonian (magnetic Schr\"odinger operator)
on discrete graphs as
\begin{equation}
  \left(H^Af\right)\left(u\right)
  = -\sum_{v\sim u}e^{iA_{v,u}}f\left(v\right) +
  q\left(u\right)f\left(u\right),
  \label{eq:discr_mag_schrod}
\end{equation}
with the convention that $A_{v,u}=-A_{u,v}$, which makes $H^A$ self-adjoint.
For further details, the reader should consult
\cite{LieLos_dmj93,Sun_cm94,CdV_spectre,CdVToHTru_afstm11}.

A sequence of vertices $C=[u_{1},u_{2},\ldots,u_{n}]$ is called a
\emph{cycle} if each two consecutive vertices, $u_{j} \mbox{ and } u_{j+1}$, are connected
by an edge ($u_{n+1}$ is understood as $u_{1}$). The flux through
the cycle $C$ is defined as
\begin{equation}
  \Phi_{C} =
  \left(A_{u_1,u_2} + \ldots + A_{u_{n-1},u_n} +
    A_{u_n,u_1}\right)\mod 2\pi.
  \label{eq:flux_through_C}
\end{equation}
Two operators which have the same flux through every cycle $C$ are
unitarily equivalent (by a gauge transformation). Therefore, the effect
of the magnetic field on the spectrum is fully determined by $\beta$
fluxes through a chosen set of basis cycles of the cycle space. We
denote them by $\alpha_{1},\ldots,\alpha_{\beta}$ and consider the
$n$-th eigenvalue of the graph as a function of $\avec$.

A result similar to theorem~\ref{thm:nodal_mag} holds for
discrete graphs (in fact, discrete graphs were the original context
in which the magnetic-nodal connection was established \cite{Ber_apde13,CdV_apde13}).
Note that in the discrete version of the theorem the nodal count should
be modified if one considers a symmetric point $\avecc\in\left\{ 0,\pi\right\} ^{\beta}$
which differs from zero. The nodal count one should consider in such
a case is

\begin{equation*}
  \phi_{n}=\Big|\{(u,v)\in E\,:\, H_{u,v}\left(\avecc\right)f_{n}(u)f_{n}(v)>0\}\Big|,
\end{equation*}
and its surplus, $\phi_{n}-\left(n-1\right)$, equals the Morse index
of $\lambda_{n}$ at the the symmetric point, $\avecc$, according
to the corresponding theorem \cite{Ber_apde13,CdV_apde13}. Note that
this modified nodal count is identical to the one previously defined
if $\avecc=\boldsymbol{0}$.

\subsection{Periodic discrete graphs}
\label{sub:Periodic-discrete-graphs}

Let $\G$ be an infinite $\Z^{k}$-periodic discrete graph such
that its Hamiltonian has a $\Z^{k}$-periodic electric potential,
$q$, and no magnetic potential. We denote by $s_{j}(v)$ the
vertex in $\G$ resulting from shifting the vertex $v$ in the positive
$j^{th}$ $\Z^{k}$-direction. We choose a fundamental domain $\W$ and
define two vertices $(v,u)\in\W$ to be the \textit{$j^{th}$ quasi-connected
pair} if the vertex $s_{j}(v)$ is connected to $u$ in $\G$ (see
figure~\ref{fig:fundamentaldomain}). In general, a pair of quasi-connected
vertices is not connected in $\W$. We restrict ourselves to graphs $\G$
that have only one quasi-connected pair in each direction, at least
for some choice of $\W$.

\begin{figure}
  \centering
  \includegraphics[width=10cm,height=6cm]{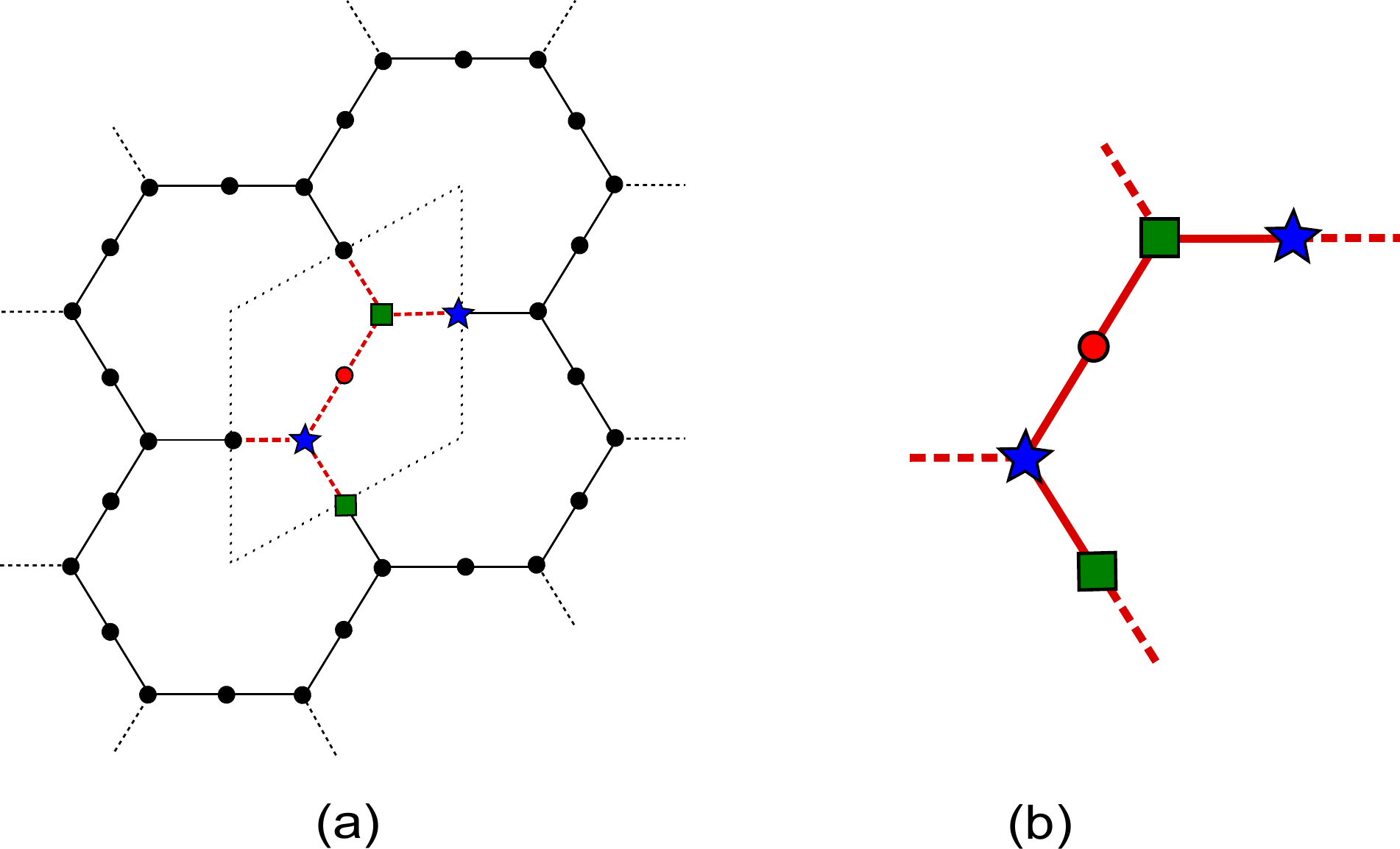}
  \caption{An infinite $\Z^{2}$-periodic graph with hexagonal lattice
    and chosen fundamental domain in box (a); the chosen fundamental
    domain with quasi-connected vertices marked with the same shape and
    color (b).}
  \label{fig:fundamentaldomain}
\end{figure}

Let $\avec=(\alpha_{1},\alpha_{2},\ldots,\alpha_{k})$ and on the
fundamental domain $\W$ define the operator $H^{\avec}=Q-C-M^{\avec}$,
where $Q$ is a diagonal matrix with real entries representing electric
potential, $C$ is the connectivity matrix of $\W$, and
\begin{equation*}
  M_{(v,u)}^{\avec} =
  \begin{cases}
    e^{i\alpha_{j}} & \mbox{if $(v,u)$ is the $j^{th}$ quasi-connected pair},\\
    e^{-i\alpha_{j}} & \mbox{if $(u,v)$ is the $j^{th}$ quasi-connected pair},\\
    0 & \mbox{otherwise}.
  \end{cases}
\end{equation*}

Similarly, let $\boldsymbol{\gamma}=(\gamma_{1},\gamma_{2},\ldots,\gamma_{k})$
and on $\W$ define the operator $H^{\gvec}=Q-C-M^{\gvec}$ where
$M^{\gvec}$ is the diagonal matrix
\begin{equation*}
  M_{(v,v)}^{\gvec} =
  \begin{cases}
    \gamma_j & \mbox{if $(v,u)$ is the $j^{th}$ quasi-connected pair},\\
    \frac{1}{\gamma_j} & \mbox{if $(u,v)$ is the $j^{th}$ quasi-connected pair},\\
    0 & \mbox{otherwise}.
  \end{cases}
\end{equation*}

We are now ready to present the main theorem of this appendix, which
is the discrete analogue of theorem~\ref{thm:quantumdegeneracy}.

\begin{theorem}
  \label{thm: discretedegeneracy}
  Suppose the infinite periodic discrete graph $\G$ has no magnetic
  potential and only one quasi-connected vertex pair in each
  direction, $\lambda_{n}(\avec)$ has a critical point $\avecc$ that
  is not at a symmetry point of the Brillouin zone (i.e., $\exists j$
  such that $\alpha_{j}^{*}\neq0,\pi$), the eigenvalue
  $\lambda_{n}(\avecc)$ is simple, and the corresponding eigenvector
  $f$ is non-zero at all quasi-connected vertex pairs.  Then
  \begin{enumerate}
  \item $\gamma_{j}^{*}=e^{i\alpha_{j}^{*}}\frac{f(u)}{f(v)}$ is real for all $j$ where
    $(v,u)$ is the $j^{th}$ quasi-connected pair,
  \item $\lambda=\lambda_{n}(\avecc)$ is a degenerate eigenvalue of
    $H^{\gvecc}$, and
  \item if, additionally, $\W$ is a tree, there exists an internal
    vertex of $\W$, of degree three or higher, such that the
    eigenfunction $f$ is zero at this vertex.
  \end{enumerate}
\end{theorem}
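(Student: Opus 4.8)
The plan is to transcribe the proof of Theorem~\ref{thm:quantumdegeneracy} into the discrete setting, treating the three assertions in order; the discrete argument is in fact shorter because no integration by parts is needed. For part~(1), I would first note that $\lambda_{n}(\avec)$ is real-analytic near $\avecc$ because $\lambda_{n}(\avecc)$ is simple \cite{Kato_book}, and that $H^{\avec}=Q-C-M^{\avec}$ depends analytically on $\avec$ through the entries $e^{\pm i\alpha_{j}}$. Only the two entries of $M^{\avec}$ sitting at the $j$th quasi-connected pair $(v,u)$ involve $\alpha_{j}$, so $\partial_{\alpha_{j}}H^{\avec}$ is the matrix with $-ie^{i\alpha_{j}}$ in position $(v,u)$, $ie^{-i\alpha_{j}}$ in position $(u,v)$, and zeros elsewhere. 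Normalising $f$, the Hellmann--Feynman theorem \cite{Feynman} gives
\[
\frac{\partial\lambda_{n}}{\partial\alpha_{j}}
= \big(\partial_{\alpha_{j}}H^{\avec}f,\,f\big)
= 2\,\im\!\big(e^{i\alpha_{j}}f(u)\,\overline{f(v)}\big),
\]
the discrete analogue of Lemma~\ref{lemma: real4}. At the critical point the left side vanishes, so $e^{i\alpha_{j}^{*}}f(u)\overline{f(v)}\in\R$; dividing by $|f(v)|^{2}>0$ (here we use that $f$ is nonzero at the quasi-connected pairs) shows $\gamma_{j}^{*}=e^{i\alpha_{j}^{*}}f(u)/f(v)$ is real, and it is nonzero since $f(u)\neq0$.

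For part~(2), with the $\gamma_{j}^{*}$ real the matrix $H^{\gvecc}=Q-C-M^{\gvecc}$ is real and symmetric. I would check that $(M^{\gvecc}-M^{\avecc})f=0$ by a term-by-term comparison at the quasi-connected vertices: for the $j$th pair $(v,u)$ one has $(M^{\avecc}f)(v)=e^{i\alpha_{j}^{*}}f(u)=\gamma_{j}^{*}f(v)=(M^{\gvecc}f)(v)$ and, symmetrically, $(M^{\avecc}f)(u)=e^{-i\alpha_{j}^{*}}f(v)=\frac{1}{\gamma_{j}^{*}}f(u)=(M^{\gvecc}f)(u)$, directly from the definition of $\gamma_{j}^{*}$, while the remaining entries (those of $Q-C$) coincide. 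Hence $H^{\gvecc}f=H^{\avecc}f=\lambda f$, so $f$ is an eigenvector of $H^{\gvecc}$. It cannot be a scalar multiple of a real vector, for then $f(u)/f(v)$ would be real and $\gamma_{j}^{*}=e^{i\alpha_{j}^{*}}\cdot(\text{nonzero real})$ would be real only when $\alpha_{j}^{*}\in\{0,\pi\}$, contradicting the hypothesis that $\alpha_{j}^{*}\neq0,\pi$ for some $j$. Since $H^{\gvecc}$ is real, the real and imaginary parts of $f$ are both eigenvectors for $\lambda$ and are $\R$-linearly independent, so $\lambda$ is a degenerate eigenvalue of $H^{\gvecc}$.

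For part~(3), when $\W$ is a tree the diagonal matrix $M^{\gvecc}$ may be absorbed into the potential, so $H^{\gvecc}=(Q-M^{\gvecc})-C$ is a real symmetric matrix whose underlying graph is the tree $\W$, and by part~(2) the eigenvalue $\lambda$ is degenerate. The statement then follows from the tree lemma that is the discrete counterpart of \cite[cor~3.1.9]{BK_graphs}: a degenerate eigenvalue of such a matrix admits an internal vertex $v$ of degree at least three at which \emph{every} eigenvector of that eigenvalue vanishes --- so, in particular, $f(v)=0$. This is precisely the Parter--Wiener phenomenon for symmetric matrices whose graph is a tree (deleting a suitable ``Parter'' vertex of degree $\ge 3$ raises the multiplicity of $\lambda$, which forces all $\lambda$-eigenvectors to be zero there); alternatively it can be derived from the second-order recursion satisfied by eigenfunctions on a tree, by propagating from the leaves inward. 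This tree lemma is the one ingredient not obtained by merely copying the quantum-graph proof, and I expect establishing it --- in particular handling the exceptional eigenvalues at which a coefficient in the recursion (equivalently, in the deletion step) degenerates --- to be the main obstacle.
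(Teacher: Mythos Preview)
Your proposal is correct and follows essentially the same route as the paper: Hellmann--Feynman for part~(1), the entrywise verification $M^{\gvecc}f=M^{\avecc}f$ together with the non-reality of $f$ for part~(2), and the tree--degeneracy lemma for part~(3). The only cosmetic difference is that the paper dispatches part~(3) by a direct citation (Fiedler, 1975) rather than discussing the Parter--Wiener mechanism you sketch.
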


In order to prove theorem~\ref{thm: discretedegeneracy},
we establish the following two lemmas.

\begin{lemma}
  \label{lemma: real}
  Suppose that the discrete infinite periodic graph $\G$ has no
  magnetic potential and only one quasi-connected vertex pair in each
  direction. If $\avecc$ is a critical point of $\lambda_{n}(\avec)$ and
  $\lambda_{n}(\avecc)$ is a simple eigenvalue, then the eigenvector
  $f$ corresponding to $\lambda_{n}(\avecc)$ satisfies
  \begin{equation}
    e^{i\alpha_{j}^{*}}f(u)\overline{f(v)}\in\R
    \qquad \forall j=1,2,\ldots,k
    \label{eq: real lemma}
  \end{equation}
  where $(v,u)$ is the $j^{th}$ quasi-connected pair.
\end{lemma}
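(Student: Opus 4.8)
The plan is to imitate the proof of Lemma~\ref{lemma: real4}: differentiate $\lambda_{n}(\avec)$ in the direction $\alpha_{j}$, evaluate at $\avecc$, and use that this derivative vanishes there because $\avecc$ is a critical point. The discrete case is in fact cleaner than the quantum-graph one, since there is no integration by parts and no Wronskian identity to invoke --- the dependence of $H^{\avec}$ on $\alpha_{j}$ is confined to two matrix entries.

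First I would record that, since $\lambda_{n}(\avecc)$ is a simple eigenvalue of the Hermitian matrix $H^{\avecc}$, analytic perturbation theory~\cite{Kato_book} lets us choose $\lambda_{n}(\avec)$ and a unit eigenvector $f=f(\avec)$ analytically in a neighbourhood of $\avecc$, so the derivatives below make sense. Then I would apply the Hellmann--Feynman theorem~\cite{Feynman},
\[
  \frac{\partial \lambda_{n}}{\partial \alpha_{j}}
  = \left(\frac{\partial H^{\avec}}{\partial \alpha_{j}}\, f,\, f\right),
\]
and observe that $H^{\avec}=Q-C-M^{\avec}$ depends on $\alpha_{j}$ only through the entries of $M^{\avec}$ indexed by the $j$-th quasi-connected pair $(v,u)$, namely $M^{\avec}_{(v,u)}=e^{i\alpha_{j}}$ and $M^{\avec}_{(u,v)}=e^{-i\alpha_{j}}$ (it is the Hermiticity of $M^{\avec}$ that makes $H^{\avec}$ self-adjoint). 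Hence $\partial_{\alpha_{j}}H^{\avec}$ has exactly the two nonzero entries $-ie^{i\alpha_{j}}$ and $ie^{-i\alpha_{j}}$ in positions $(v,u)$ and $(u,v)$.

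Substituting this into the inner product and combining the two resulting mutually conjugate terms gives
\[
  \frac{\partial \lambda_{n}}{\partial \alpha_{j}}
  = 2\,\im\!\left(e^{i\alpha_{j}} f(u)\overline{f(v)}\right)
\]
(the overall sign depends on the inner-product convention and is immaterial here). Since $\avecc$ is a critical point of $\lambda_{n}(\avec)$, the left-hand side is zero at $\avec=\avecc$ for every $j=1,\dots,k$, which is precisely~\eqref{eq: real lemma}.

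The only genuine care needed is the bookkeeping of signs and of the inner-product convention in the Hellmann--Feynman step; there is no analytic obstacle and no delicate estimate. I would also add a remark that, in contrast with the quantum-graph statement, one does not need $f$ to be non-vanishing at the quasi-connected vertices for this lemma: if $f(u)=0$ or $f(v)=0$ then $e^{i\alpha_{j}^{*}}f(u)\overline{f(v)}=0\in\R$ trivially, and that hypothesis is only invoked later, in Theorem~\ref{thm: discretedegeneracy}, to form the ratios $\gamma_{j}^{*}$.
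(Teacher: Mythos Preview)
Your proposal is correct and follows essentially the same approach as the paper: differentiate $H^{\avec}$ with respect to $\alpha_j$, apply Hellmann--Feynman, and read off that the derivative equals $2\,\im\!\left(e^{i\alpha_j^*}f(u)\overline{f(v)}\right)$, which vanishes at the critical point. Your additional remark that the non-vanishing hypothesis on $f$ at the quasi-connected vertices is not needed here (only later to form the $\gamma_j^*$) is a nice observation not made explicit in the paper.
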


\begin{proof} Since $\lambda_{n}(\avecc)$ is simple,
  $\lambda_{n}(\avec)$ is analytic around the critical point
  $\avec=\avecc$.  Let $f_{\avec}$ be an eigenvector of norm one
  corresponding to $\lambda_{n}(\avec)$.  In particular,
  $f=f_{\avecc}$.  We have
  \begin{equation*}
    \left(\frac{\partial H^{\avec}}{\partial\alpha_{j}}\right)_{(v,u)}
    = -\left(\frac{\partial
        M^{\avec}}{\partial\alpha_{j}}\right)_{(v,u)}
    =
    \begin{cases}
      -ie^{i\alpha_{j}} & \mbox{if $(v,u)$ is the $j^{th}$ quasi-connected pair},\\
      ie^{-i\alpha_{j}} & \mbox{if $(u,v)$ is the $j^{th}$ quasi-connected pair},\\
      0 & \mbox{otherwise.}
    \end{cases}
  \end{equation*}
  Therefore, since $\avecc$ is a critical point of $\lambda_{n}(\avec)$,
  one can see that
  \begin{align*}
    0 = \left. \frac{\partial}{\partial\alpha_j} \lambda_n(\avec)
    \right|_{\avec=\avecc} = \left(\frac{\partial
        H^\avecc}{\partial\alpha_j}f, f\right)
    & =  -i e^{i\alpha_j^*} f(u)\cc{f(v)} + i e^{-i\alpha_j^*} f(v)\cc{f(u)}\\
    & = 2\im\left[e^{i\alpha_j^*}f(u)\cc{f(v)}\right],
  \end{align*}
  which completes the proof.
\end{proof}

\begin{lemma}
  \label{lemma: eigenvector}
  Suppose that $\G$ is a discrete infinite periodic graph and $\avecc$
  is a critical point of $\lambda_{n}(\avec)$. If the eigenvector $f$
  of $H^{\avecc}$ corresponding to $\lambda=\lambda_{n}(\avecc)$ is
  non-zero at all quasi-connected vertex pairs, then $f$ is also an
  eigenvector of $H^{\gvecc}$ corresponding to the same eigenvalue
  $\lambda$ where $(v,u)$ is the $j^{th}$ quasi-connected vertex pair
  and
  \begin{equation*}
    \gamma_{j}^{*}=e^{i\alpha_{j}^{*}}\frac{f(u)}{f(v)} \in \R.
  \end{equation*}
\end{lemma}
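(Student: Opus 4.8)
The plan is to obtain this lemma as an essentially immediate consequence of Lemma~\ref{lemma: real}, supplemented by a one-line comparison of how $M^{\avecc}$ and $M^{\gvecc}$ act on the eigenvector $f$. Since $H^{\avecc}=Q-C-M^{\avecc}$ and $H^{\gvecc}=Q-C-M^{\gvecc}$ differ only in the term $M$, once we know $M^{\avecc}f=M^{\gvecc}f$ we get $H^{\gvecc}f=H^{\avecc}f=\lambda f$ for free.

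First I would record the reality (and non-vanishing) of $\gamma_j^*$. Let $(v,u)$ be the $j$th quasi-connected pair. Lemma~\ref{lemma: real} states that $e^{i\alpha_j^*}f(u)\overline{f(v)}\in\R$. By hypothesis $f$ does not vanish on any quasi-connected vertex, so $f(v)\neq0$, and dividing by $|f(v)|^2>0$ gives
\[
  \gamma_j^* \;=\; e^{i\alpha_j^*}\frac{f(u)}{f(v)} \;=\; \frac{e^{i\alpha_j^*}f(u)\overline{f(v)}}{|f(v)|^2}\;\in\;\R .
\]
Moreover $\gamma_j^*\neq0$ since $f(u)\neq0$, so $1/\gamma_j^*$ is a well-defined real number and the entry of the diagonal matrix $M^{\gvecc}$ attached to the second vertex of the pair makes sense.

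Next I would verify the identity $M^{\avecc}f=M^{\gvecc}f$ vertex by vertex. Both matrices are supported only on the vertices occurring in quasi-connected pairs, and it suffices to compare, for each $j$, the contribution of the $j$th pair $(v,u)$. The off-diagonal matrix $M^{\avecc}$ contributes the value $e^{i\alpha_j^*}f(u)$ at the vertex $v$ and $e^{-i\alpha_j^*}f(v)$ at the vertex $u$; the diagonal matrix $M^{\gvecc}$ contributes $\gamma_j^* f(v)$ at $v$ and $(1/\gamma_j^*)f(u)$ at $u$. The first pair of values coincide by the very definition $\gamma_j^*=e^{i\alpha_j^*}f(u)/f(v)$, and the second coincide because $(1/\gamma_j^*)f(u)=e^{-i\alpha_j^*}\bigl(f(v)/f(u)\bigr)f(u)=e^{-i\alpha_j^*}f(v)$. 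Summing over $j$ (each vertex receiving exactly the terms from the pairs it belongs to) yields $M^{\avecc}f=M^{\gvecc}f$, hence $H^{\gvecc}f=\lambda f$.

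I do not expect a genuine obstacle here: the argument is a short reality check followed by an algebraic identity. The only points that require a little care are the sign and index conventions in the definitions of $M^{\avec}$ and $M^{\gvec}$ (so that the $j$th pair $(v,u)$ produces the factor $e^{i\alpha_j^*}$ out of $v$ and $e^{-i\alpha_j^*}$ out of $u$, matched by $\gamma_j^*$ and $1/\gamma_j^*$ respectively), and the systematic use of the non-vanishing hypothesis — once to divide by $f(v)$ when establishing $\gamma_j^*\in\R$, and once to guarantee $\gamma_j^*\neq0$ so that $1/\gamma_j^*$ appears legitimately as an entry of $M^{\gvecc}$.
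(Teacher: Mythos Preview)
Your proposal is correct and follows essentially the same route as the paper: reduce $H^{\gvecc}f=H^{\avecc}f$ to $M^{\gvecc}f=M^{\avecc}f$ and check this identity vertex by vertex using the definition of $\gamma_j^*$. The only difference is that you explicitly invoke Lemma~\ref{lemma: real} to obtain $\gamma_j^*\in\R$ (and note $\gamma_j^*\neq0$), whereas the paper's proof of this lemma treats the reality as immediate from the critical-point hypothesis and focuses solely on the eigenvector identity.
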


\begin{proof}
  We will demonstrate that $H^{\gvecc}f=H^{\avecc}f=\lambda f$.  Using
  the definitions of the operators, one can see that this is
  equivalent to showing
  \begin{equation*}
    H^{\gvecc}f=(Q-C)f-M^{\gvecc}f=(Q-C)f-M^{\avecc}f=H^{\avecc}f=\lambda f,
  \end{equation*}
  or in other words
  \begin{equation}
    M^{\gvecc}f=M^{\avecc}f.
    \label{eq: Mgamma}
  \end{equation}

  Suppose that $(v,u)$ is the $j^{th}$ quasi-connected vertex pair.
  Then at $v$ we have
  \begin{align*}
    (M^{\gvecc}f)(v) =\gamma_{j}^{*}f(v)
    = e^{i\alpha_{j}^{*}}\frac{f(u)}{f(v)}f(v) = e^{i\alphac_{j}}f(u)
    = (M^{\avecc}f)(v).
  \end{align*}
  Similarly, at $u$ we have
  \begin{align*}
    (M^{\gvecc}f)(u) =\frac{1}{\gamma_{j}^{*}}f(u)
    = e^{-i\alphac_{j}}f(v) = (M^{\avecc}f)(u),
  \end{align*}
  and at a vertex $w$ which is not in a quasi-connected pair,
  \begin{equation*}
    (M^{\gvecc}f)(w)=0=(M^{\avecc}f)(w).
  \end{equation*}
\end{proof}

\begin{proof}[Proof of theorem~\ref{thm: discretedegeneracy}]
  By lemma~\ref{lemma: eigenvector}, the eigenvector $f$ of
  $H^{\avecc}$ is also an eigenvector of $H^{\gvecc}$ (with the same
  eigenvalue $\lambda$). Choose $j$ such that
  $\alpha_{j}^{*}\neq0,\pi$ (such $\alpha_{j}^{*}$ exists by the
  theorem's conditions). By lemma~\ref{lemma: real},
  \begin{equation*}
    e^{i\alpha_{j}^{*}}f(u)\overline{f(v)}\in\R
  \end{equation*}
  for all $j$, which means that $f$ has some non-real
  entries and cannot be made real by scalar multiplication. However, it also means that the operator $H^{\gvecc}$ is
  real and self-adjoint. Therefore, the real and imaginary parts
  of $f$ must both be linearly independent eigenvectors of $H^{\gvecc}$, which implies that
  $\lambda$ is a degenerate eigenvalue of $H^{\gvecc}$.

  If $\W$ is a tree, a degenerate eigenvalue can only occur if there
  exists an internal vertex of degree at least three at which all
  eigenvectors from the eigenspace vanish \cite{Fie_cmj75a}.
\end{proof}

\section*{Acknowledgement}

We are grateful to Peter Kuchment for encouraging discussions and
comments and to anonymous referee for the most attentive reading of
our work and resulting numerous improvements.  GB was partially
supported by the NSF (grant no. DMS-1410657). RB was supported by the
ISF (grant no. 494/14), Marie Curie Actions (grant
no. PCIG13-GA-2013-618468), and the Taub Foundation (Taub Fellow).

\bibliographystyle{amsalpha}
\bibliography{bk_bibl,nodal_conical}

\newcommand{\etalchar}[1]{$^{#1}$}
\def\cprime{$'$} \def\cprime{$'$} \def\cprime{$'$} \def\cprime{$'$}
  \def\cprime{$'$} \def\cprime{$'$} \def\cprime{$'$}
  \def\polhk#1{\setbox0=\hbox{#1}{\ooalign{\hidewidth
  \lower1.5ex\hbox{`}\hidewidth\crcr\unhbox0}}} \def\cprime{$'$}
  \def\cprime{$'$} \def\cprime{$'$} \def\cprime{$'$}
\providecommand{\bysame}{\leavevmode\hbox to3em{\hrulefill}\thinspace}
\providecommand{\MR}{\relax\ifhmode\unskip\space\fi MR }
\providecommand{\MRhref}[2]{%
  \href{http://www.ams.org/mathscinet-getitem?mr=#1}{#2}
}
\providecommand{\href}[2]{#2}
\begin{thebibliography}{CdVTHT11}

\bibitem[Arn78]{Arnold_mechanics}
V.~I. Arnold, \emph{Mathematical methods of classical mechanics},
  Springer-Verlag, New York, 1978, Translated from the Russian by K. Vogtmann
  and A. Weinstein, Graduate Texts in Mathematics, 60. \MR{MR0690288 (57
  {\#}14033b)}

\bibitem[Ban14]{Ban_ptrsa14}
R.~Band, \emph{The nodal count {$\{0,1,2,3,\dots\}$} implies the graph is a
  tree}, Philos. Trans. R. Soc. Lond. A \textbf{372} (2014), no.~2007,
  20120504, 24, preprint {\tt arXiv:1212.6710}.

\bibitem[BB13]{BanBer_prl13}
R.~Band and G.~Berkolaiko, \emph{Universality of the momentum band density of
  periodic networks}, Phys. Rev. Lett. \textbf{111} (2013), 130404.

\bibitem[BBRS12]{BanBerRazSmi_cmp12}
R.~Band, G.~Berkolaiko, H.~Raz, and U.~Smilansky, \emph{The number of nodal
  domains on quantum graphs as a stability index of graph partitions}, Commun.
  Math. Phys. \textbf{311} (2012), no.~3, 815--838.

\bibitem[BBS12]{BanBerSmi_ahp12}
R.~Band, G.~Berkolaiko, and U.~Smilansky, \emph{Dynamics of nodal points and
  the nodal count on a family of quantum graphs}, Annales Henri Poincare
  \textbf{13} (2012), no.~1, 145--184.

\bibitem[BC14]{BerCom_prep14}
G.~Berkolaiko and A.~Comech, \emph{Symmetry and {D}irac points in graphene
  spectrum}, preprint {\tt arXiv:1412.8096}, 2014.

\bibitem[Ber08]{Ber_cmp08}
G.~Berkolaiko, \emph{A lower bound for nodal count on discrete and metric
  graphs}, Comm. Math. Phys. \textbf{278} (2008), no.~3, 803--819.

\bibitem[Ber13]{Ber_apde13}
\bysame, \emph{Nodal count of graph eigenfunctions via magnetic perturbation},
  Anal. PDE \textbf{6} (2013), 1213--1233, preprint {\tt arXiv:1110.5373}.

\bibitem[BK12]{BerKuc_incol12}
G.~Berkolaiko and P.~Kuchment, \emph{Dependence of the spectrum of a quantum
  graph on vertex conditions and edge lengths}, Spectral Geometry, Proceedings
  of Symposia in Pure Mathematics, vol.~84, American Math. Soc., 2012, preprint
  {\tt arXiv:1008.0369}.

\bibitem[BK13]{BK_graphs}
\bysame, \emph{Introduction to quantum graphs}, Mathematical Surveys and
  Monographs, vol. 186, AMS, 2013.

\bibitem[BL14]{BerLiu_prep14}
G.~Berkolaiko and W.~Liu, \emph{Genericity of eigenpairs of a quantum graph},
  in preparation, 2014.

\bibitem[BLS07]{BiyLeySta_book}
T.~B{\i}y{\i}ko{\u{g}}lu, J.~Leydold, and P.~F. Stadler, \emph{Laplacian
  eigenvectors of graphs}, Lecture Notes in Mathematics, vol. 1915, Springer,
  Berlin, 2007.

\bibitem[BOS08]{BanOreSmi_incoll08}
R.~Band, I.~Oren, and U.~Smilansky, \emph{Nodal domains on graphs---how to
  count them and why?}, Analysis on graphs and its applications, Proc. Sympos.
  Pure Math., vol.~77, Amer. Math. Soc., Providence, RI, 2008, pp.~5--27.
  \MR{MR2459862}

\bibitem[BW13]{BerWey_ptrsa13}
G.~Berkolaiko and T.~Weyand, \emph{Stability of eigenvalues of quantum graphs
  with respect to magnetic perturbation and the nodal count of the
  eigenfunctions}, Phi. Trans. R. Soc. A \textbf{372} (2013), 2012.0522,
  preprint {\tt arXiv:1212.4475}.

\bibitem[CdV98]{CdV_spectre}
Y.~Colin~de Verdi{\`e}re, \emph{Spectres de graphes}, Cours Sp\'ecialis\'es
  [Specialized Courses], vol.~4, Soci\'et\'e Math\'ematique de France, Paris,
  1998. \MR{1652692 (99k:05108)}

\bibitem[CdV13]{CdV_apde13}
\bysame, \emph{Magnetic interpretation of the nodal defect on graphs}, Anal.
  PDE \textbf{6} (2013), no.~5, 1235--1242.

\bibitem[CdVTHT11]{CdVToHTru_afstm11}
Y.~Colin~de Verdi{\`e}re, N.~Torki-Hamza, and F.~Truc, \emph{Essential
  self-adjointness for combinatorial {S}chr\"odinger operators
  {III}---{M}agnetic fields}, Ann. Fac. Sci. Toulouse Math. (6) \textbf{20}
  (2011), no.~3, 599--611.

\bibitem[CNGP{\etalchar{+}}09]{Cas+_rmp09}
A.H. Castro~Neto, F.~Guinea, N.M.R. Peres, K.S. Novoselov, and A.K. Geim,
  \emph{The electronic properties of graphene}, Rev. Mod. Phys. \textbf{81}
  (2009), 109--162.

\bibitem[DF15]{Dem_prep15}
S.~Demirel-Frank, \emph{Sharp eigenvalue bounds on quantum star graphs},
  preprint arXiv:1503.06915 [math.SP] (2015).

\bibitem[DK13]{DoKuc_ns13}
N.T. Do and P.~Kuchment, \emph{Quantum graph spectra of a graphyne structure},
  Nanoscale Systems: Mathematical Modeling, Theory and Applications \textbf{2}
  (2013), 107--123.

\bibitem[EKW10]{ExnKucWin_jpa10}
P.~Exner, P.~Kuchment, and B.~Winn, \emph{On the location of spectral edges in
  {$\mathbb{Z}$}-periodic media}, J. Phys. A \textbf{43} (2010), no.~47,
  474022, 8.

\bibitem[Fey39]{Feynman}
R.~P. Feynman, \emph{Forces in molecules}, Phys. Rev. \textbf{56} (1939),
  340--343.

\bibitem[Fie75]{Fie_cmj75a}
M.~Fiedler, \emph{Eigenvectors of acyclic matrices}, Czechoslovak Math. J.
  \textbf{25(100)} (1975), no.~4, 607--618.

\bibitem[Fil04]{Filonov_aia04}
N.~Filonov, \emph{On an inequality for the eigenvalues of the {D}irichlet and
  {N}eumann problems for the {L}aplace operator}, Algebra i Analiz \textbf{16}
  (2004), no.~2, 172--176. \MR{2068346 (2005f:35228)}

\bibitem[Fri91]{Friedlander_rma91}
L.~Friedlander, \emph{Some inequalities between {D}irichlet and {N}eumann
  eigenvalues}, Archive for Rational Mechanics and Analysis \textbf{116}
  (1991), no.~2, 153--160 (English).

\bibitem[Fri05]{Fri_ijm05}
\bysame, \emph{Genericity of simple eigenvalues for a metric graph}, Israel J.
  Math. \textbf{146} (2005), 149--156. \MR{MR2151598 (2006d:34063)}

\bibitem[FW12]{FefWei_jams12}
C.~L. Fefferman and M.~I. Weinstein, \emph{Honeycomb lattice potentials and
  {D}irac points}, J. Amer. Math. Soc. \textbf{25} (2012), no.~4, 1169--1220.

\bibitem[FW14]{FefWei_cmp14}
\bysame, \emph{Wave packets in honeycomb structures and two-dimensional {D}irac
  equations}, Comm. Math. Phys. \textbf{326} (2014), no.~1, 251--286.

\bibitem[GS06]{GnuSmi_ap06}
S.~Gnutzmann and U.~Smilansky, \emph{Quantum graphs: Applications to quantum
  chaos and universal spectral statistics}, Adv. Phys. \textbf{55} (2006),
  no.~5--6, 527--625.

\bibitem[GSW04]{GnuSmiWeb_wrm04}
S.~Gnutzmann, U.~Smilansky, and J.~Weber, \emph{Nodal counting on quantum
  graphs}, Waves Random Media \textbf{14} (2004), no.~1, S61--S73.

\bibitem[HKSW07]{HarKucSob_jpa07}
J.~M. Harrison, P.~Kuchment, A.~Sobolev, and B.~Winn, \emph{On occurrence of
  spectral edges for periodic operators inside the {B}rillouin zone}, J. Phys.
  A \textbf{40} (2007), no.~27, 7597--7618. \MR{MR2369966 (2008j:81039)}

\bibitem[Kat76]{Kato_book}
T.~Kato, \emph{Perturbation theory for linear operators}, second ed.,
  Springer-Verlag, Berlin, 1976, Grundlehren der Mathematischen Wissenschaften,
  Band 132. \MR{0407617 (53 \#11389)}

\bibitem[Kat12]{Katsnelson_graphene}
M.~I. Katsnelson, \emph{Graphene: Carbon in two dimensions}, Cambridge
  University Press, 2012.

\bibitem[KK02]{KucKun_acm02}
P.~Kuchment and L.~Kunyansky, \emph{Differential operators on graphs and
  photonic crystals}, Adv. Comput. Math. \textbf{16} (2002), no.~2--3,
  263--290.

\bibitem[KP07]{KucPos_cmp07}
P.~Kuchment and O.~Post, \emph{On the spectra of carbon nano-structures}, Comm.
  Math. Phys. \textbf{275} (2007), no.~3, 805--826. \MR{MR2336365
  (2008j:81041)}

\bibitem[KS99]{KotSmi_ap99}
T.~Kottos and U.~Smilansky, \emph{Periodic orbit theory and spectral statistics
  for quantum graphs}, Ann. Physics \textbf{274} (1999), no.~1, 76--124.

\bibitem[KS03]{KosSch_cmp03}
V.~Kostrykin and R.~Schrader, \emph{Quantum wires with magnetic fluxes}, Comm.
  Math. Phys. \textbf{237} (2003), no.~1-2, 161--179, Dedicated to Rudolf Haag.

\bibitem[Kuc93]{Kuc_floquet}
P.~Kuchment, \emph{Floquet theory for partial differential equations}, Operator
  Theory: Advances and Applications, vol.~60, Birkh{\"a}user Verlag, Basel,
  1993.

\bibitem[Kuc02]{Kuc_wrm02}
\bysame, \emph{Graph models for waves in thin structures}, Waves Random Media
  \textbf{12} (2002), no.~4, R1--R24.

\bibitem[Kuc08]{Kuc_incol08}
\bysame, \emph{Quantum graphs: an introduction and a brief survey}, Analysis on
  graphs and its applications, Proc. Sympos. Pure Math., vol.~77, Amer. Math.
  Soc., Providence, RI, 2008, pp.~291--312.

\bibitem[LL93]{LieLos_dmj93}
E.~H. Lieb and M.~Loss, \emph{Fluxes, {L}aplacians, and {K}asteleyn's theorem},
  Duke Math. J. \textbf{71} (1993), no.~2, 337--363.

\bibitem[Nov11]{Nov_nob10}
K.~Novoselov, \emph{Nobel lecture: Graphene: Materials in the flatland}, Rev.
  Mod.Phys. \textbf{83} (2011), 837--849.

\bibitem[Pos12]{Post_book12}
O.~Post, \emph{Spectral analysis on graph-like spaces}, Lecture Notes in
  Mathematics, vol. 2039, Springer Verlag, Berlin, 2012.

\bibitem[PPAO96]{PokPryObe_mz96}
Yu.~V. Pokorny{\u\i}, V.~L. Pryadiev, and A.~Al{\cprime}-Obe{\u\i}d, \emph{On
  the oscillation of the spectrum of a boundary value problem on a graph}, Mat.
  Zametki \textbf{60} (1996), no.~3, 468--470. \MR{MR1428867 (98b:34033)}

\bibitem[Rue11]{Rue_prep11}
R.~Rueckriemen, \emph{Recovering quantum graphs from their {B}loch spectrum},
  preprint {\tt arXiv:1101.6002}, 2011.

\bibitem[Sch06]{Sch_wrcm06}
P.~Schapotschnikow, \emph{Eigenvalue and nodal properties on quantum graph
  trees}, Waves Random Complex Media \textbf{16} (2006), no.~3, 167--178.

\bibitem[Sun94]{Sun_cm94}
T.~Sunada, \emph{A discrete analogue of periodic magnetic {S}chr\"odinger
  operators}, Geometry of the spectrum ({S}eattle, {WA}, 1993), Contemp. Math.,
  vol. 173, Amer. Math. Soc., Providence, RI, 1994, pp.~283--299. \MR{1298211
  (95i:58185)}

\bibitem[Wey14]{Wey_phd14}
T.~Weyand, \emph{Zeros of eigenfunctions of the {S}chr\"odinger operator on
  graphs and their relation to the spectrum of the magnetic {S}chr\"odinger
  operator}, Ph.D. thesis, Texas A\&M University, 2014.

\end{thebibliography}

\end{document}